\documentclass[12pt,a4paper]{article}
\usepackage[T1]{fontenc}
\usepackage{lmodern}
\usepackage{amsmath,amssymb,amsthm,mathtools,bm}
\usepackage{microtype}
\usepackage{geometry}
\usepackage{hyperref}
\usepackage{cite}
\usepackage{booktabs}
\usepackage{xcolor}
\usepackage{graphicx}
\usepackage{float}
\hypersetup{colorlinks=true,linkcolor=blue!55!black,citecolor=blue!55!black,urlcolor=blue!55!black}
\newcommand{\R}{\mathbb R}
\newcommand{\C}{\mathbb C}
\newcommand{\Z}{\mathbb Z}

\newcommand{\dd}{\mathrm d}
\newcommand{\e}{\mathrm e}
\newcommand{\Tr}{\mathrm{Tr}}

\newcommand{\Hom}{\mathrm{Hom}}
\newcommand{\End}{\mathrm{End}}

\newcommand{\cH}{\mathcal H}
\newcommand{\cQ}{\mathcal Q}
\newcommand{\cD}{\mathcal D}
\newcommand{\cM}{\mathcal M}

\newcommand{\cF}{\mathcal F}

\newcommand{\phys}{\mathrm{phys}}

\newcommand{\ADHM}{\mathrm{ADHM}}

\newcommand{\diag}{\mathrm{diag}}
\newtheorem{theorem}{Theorem}
\newtheorem{proposition}{Proposition}

\newtheorem{remark}{Remark}
\newtheorem{definition}{Definition}
\title{\large\textbf{The Chiral Random-Matrix Ensemble of the Type-IIB Axion--Dilaton Wormhole Partition Function}}
\author{Soo-Jong Rey\\[0.2cm]
{\sl Kwangwoon University}\\
{\sl Seoul, Korea}}
\date{\tt sjrey@kw.ac.kr}
\begin{document}
\maketitle
\begin{abstract}
I construct a microscopic ADHM/chiral-Wishart representative of the reduced charge-sector coefficient \(W_\nu[b]\) that enters the Type-IIB axion--dilaton wormhole partition function \(Z_{\rm wh}(\theta;b)\). I fix the axion-charge sector, equivalently the form-field-flux sector, which plays the exact same structural role as a fixed-topology sector in QCD: it supplies the domain where I compute the microscopic coefficient before forming the final theta sum. At \(E=0\), the axion--dilaton radial family reaches its BPS instanton endpoint. After I impose the Hamiltonian constraint, gauge quotient, charge-sector boundary condition, and collective zero-mode quotient, the physical quadratic fluctuation operator at that endpoint becomes a positive adjoint square. Its non-zero spectrum is therefore a squared singular-value spectrum, and its microscopic endpoint is a Laguerre/chiral hard edge. The D\((-1)\)/D3 super-ADHM collective-coordinate integral supplies the Type-IIB microscopic representative of this same coefficient. In the large-\(N\) result of Dorey et al., this super-ADHM measure becomes the \(k\)-D-instanton measure on \(AdS_5\times S^5\), multiplied by a centered zero-dimensional supersymmetric matrix-model factor. The chiral/Wishart ensemble forms the hard-edge limit of the rectangular block inside this super-ADHM integral. Fermionic ADHM variables and supergravity fermions remain part of the coefficient: in protected sectors they cancel paired non-zero modes, impose zero-mode saturation, and determine which reduction data \(b\) give a non-vanishing \(W_\nu[b]\).
\end{abstract}
\tableofcontents

\section{Introduction}
\label{sec:intro}
In companion works~\cite{Rey2026a, Rey2026b}, built upon previous works~\cite{GiddingsStrominger1988,Coleman1988,Rey1989a,Rey1989b,Rey1991,Rey1999} I formulated the Type-IIB axion--dilaton wormhole in sectors of fixed conserved charge. This charge is the axion charge, or equivalently the form-field flux. I performed the Hessian calculation at one such sector: its admissible variations remain tangent to the fixed-charge domain, while a charge-changing variation moves to a different saddle. This sector ordering is the starting point of the present paper.

At fixed charge, the solution with first integral \(E=0\) is the BPS instanton. The solutions with \(E>0\) are non-BPS wormholes with smooth throats connecting non-compact parent universes. These solutions belong to one radial family, but they define different fluctuation problems. The BPS instanton is the endpoint of the family. A complete non-BPS wormhole, a neck-cut geometry, and a long-distance two-end operator term each carry their own boundary conditions, boundary terms, contours, and self-adjoint domains. I define the operator below as the physical quadratic fluctuation operator of the BPS endpoint in a fixed charge sector.

Let \(\Phi_0(\nu)\) denote the BPS instanton at \(E=0\) in charge sector \(\nu\). Let \(\cF\) be the Witten--Nester/Prasad--Sommerfield--Bogomolny (BPS) map~\cite{Witten1981,Nester1981,Prasad1975,Bogomolnyi1976} derived from the axion Routhian. After I solve the Hamiltonian constraint, quotient gauge directions, impose the charge-sector boundary condition, and separate collective zero modes, the quadratic action on the physical endpoint domain takes the singular-value-square form~\cite{Rey2026a}
\begin{equation}
\delta^2 S_{\phys}=\langle \eta,\cH_\nu\eta\rangle,
\qquad
\cH_\nu=\cQ_\nu^\dagger\cQ_\nu,
\qquad
\cQ_\nu\eta=
\left.{\dd\over\dd\epsilon}\,
\cF\!\left[\Phi_0(\nu)+\epsilon\eta\right]\right|_{\epsilon=0},
\quad \eta\in\mathcal D_{\nu,\phys}.
\label{eq:intro-square}
\end{equation}
Here \(\cQ_\nu\) is the Fr\'echet derivative of the BPS map at \(\Phi_0(\nu)\), restricted to the physical endpoint domain \(\mathcal D_{\nu,\phys}\). The Hamiltonian constraint~\cite{Sasaki1986} removes the conformal-factor direction before Eq.~\eqref{eq:intro-square} defines the operator. I formalize this as a theorem regarding the physical endpoint Hessian of the BPS instanton in a fixed charge sector. The unreduced Euclidean Hessian of an \(E>0\) non-BPS wormhole is a separate operator.

One solves the eigenvalue equation for \(\cQ_\nu^\dagger\cQ_\nu\) and gets the squared singular values of the first-order map \(\cQ_\nu\). The spectrum terminates at zero, and eigenvalues of this physical endpoint operator remain on the non-negative axis. Zero is therefore a hard spectral edge. Any finite-mode approximation that preserves the Fredholm index, the anti-unitary symmetry class, and the endpoint boundary form lies in a Laguerre, or chiral, random-matrix universality class. The Bessel kernel appears at the microscopic edge because the local variable is a singular value of a rectangular chiral block.

What I refer to as the chirality is operator-theoretic. In QCD, chiral random-matrix theory models the Euclidean Dirac operator and its fermion determinant~\cite{VerbaarschotZahed1993,Verbaarschot1994,VerbaarschotWettig2000,LeutwylerSmilga1992,NagaoNishigaki2000}. Here the fluctuating fields are bosonic supergravity fields, and the Gaussian fluctuation factor contains the inverse square root of a determinant. The shared structure is the first-order singular-value map; the determinant power belongs to the coefficient weight. The canonical double
\begin{equation}
\cD_\nu=\begin{pmatrix}0&\cQ_\nu^\dagger\\[2pt]\cQ_\nu&0\end{pmatrix},
\qquad
\cD_\nu^2=\begin{pmatrix}\cQ_\nu^\dagger\cQ_\nu&0\\[2pt]0&\cQ_\nu\cQ_\nu^\dagger\end{pmatrix}
\label{eq:intro-double}
\end{equation}
has the same block structure as a chiral Dirac operator, while the Type-IIB path-integral interpretation remains bosonic. The random-matrix statement is a statement about singular values of a first-order map. This  chiral matrix ensemble belongs to the general classification of~\cite{AltlandZirnbauer1997,Zirnbauer1996}.

The ADHM system~\cite{ADHM1978,AHS1978} for D\((-1)\)/D3 brane complex supplies the microscopic Type-IIB realization of that structure~\cite{Dorey1999,Dorey2000,Dorey2002}. The ADHM data contain a rectangular bifundamental block. Singular-value gauge fixing of this block gives a Laguerre measure, with Dyson index fixed by the real, complex, or quaternionic structure of the block~\cite{James1954,Hua1963,Muirhead1982,Forrester2010,Andreief1883,BumpDiaconis2002,Szego1952}. This finite-dimensional construction realizes the same chiral singular-value normal form as the continuum BPS endpoint Hessian.

To make this identification, I need a precise label dictionary. The ADHM instanton number is \(k\). The rectangular defect of the chosen ADHM block is \(r\). The axion charge sector is \(\nu\). The hard-edge index of the gravitational BPS endpoint Hessian is \(a_\nu\). These labels carry different definitions. A compactification or duality map can identify them only after I fix the charge dictionary, endpoint domain, and anti-unitary symmetry class.

The matrix ensemble represents a coefficient in the Type-IIB axion--dilaton wormhole partition function. In the companion coefficient analysis~\cite{Rey2026a, Rey2026b}, I write the wormhole partition function in the theta representation as
\begin{equation}
Z_{\rm wh}(\theta;b)=\sum_{\nu\in\Z} W_\nu[b]e^{i\nu\theta}.
\label{eq:intro-B-coefficient}
\end{equation}
Figure~\ref{fig:C-parent-universe-two-end} displays the coefficient-level sequence. I reduce the two-end matrix to the scalar coefficient \(W_\nu[b]\), and then assemble the reduced coefficients into the theta-representation wormhole partition function.

\begin{figure}[H]
\centering
\includegraphics[width=0.92\textwidth]{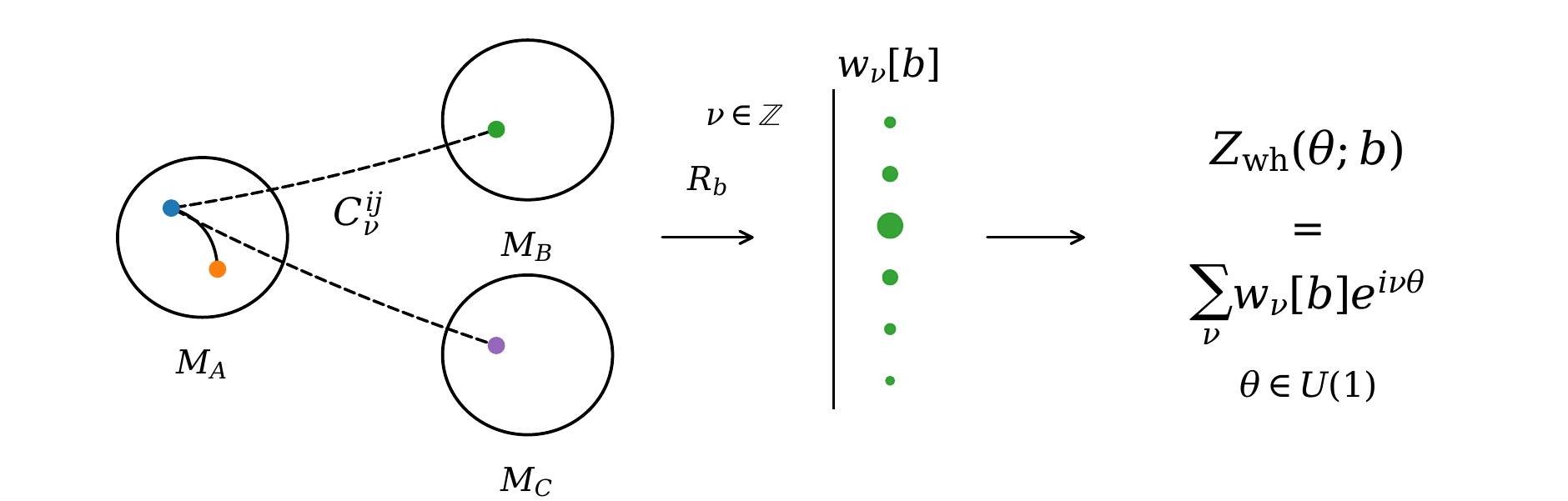}
\caption{Two-end operator reduction in the notation of the coefficient analysis. The parent-universe labels $A,B$ specify where the two integrated end insertions are placed, while $i,j$ label the end-insertion operators. The connecting coefficient is the matrix $C_\nu^{ij}$ in a chosen charge sector. The reduction data $b$ specify the operation $R_b$ that turns the unreduced coefficient matrix into the scalar reduced coefficient $W_\nu[b]$. I then sum the reduced coefficients with the compact character $e^{i\nu\theta}$ to give the wormhole partition function in the theta representation $Z_{\rm wh}(\theta;b)$.}
\label{fig:C-parent-universe-two-end}
\end{figure}

The symbol \(b\) denotes the reduction data. It records the asymptotic data, trace or final insertion, projection, zero-mode insertion, source normalization, contour, and reduction operation used to reduce the two-end coefficient. The matrix ensemble I study here is a microscopic representative of \(W_\nu[b]\) in a fixed charge sector. I obtain the full function \(Z_{\rm wh}(\theta;b)\) only after summing these coefficients over charge sectors.

I make the comparison with QCD exact at the level of sector logic. Chiral random-matrix theory represents the fixed-topology low-energy sector of the QCD path integral below a matching scale; the ultraviolet gauge theory supplies the sectors, normalization, and matching data. Here the input is a fixed Type-IIB charge sector and its BPS endpoint square. The output is the ADHM/chiral-Wishart representative of the reduced coefficient that enters the charge-sector sum for the wormhole partition function.

I organize the rest of this paper as follows. In Section~\ref{sec:bps-to-hard-group}, I derive the hard spectral edge from the BPS endpoint square. In Section~\ref{sec:adhm-group}, I derive the finite-dimensional chiral block from the ADHM construction. In Section~\ref{sec:grav-matching-group}, I compare the Type-IIB endpoint with the ADHM block through the charge map, hard-edge index, and antiunitary class. In Section~\ref{sec:verification-group}, I state the consistency conditions, verification tests, consequences, and extensions of the matching problem.

\section{From the BPS endpoint to the chiral hard spectral edge}
\label{sec:bps-to-hard-group}
\subsection{Chiral structure of the BPS endpoint}
\label{sec:crm-background}
Here, ``chiral random-matrix structure'' simply means a singular-value structure. A chiral random matrix is the universal finite-dimensional model for a first-order map and for the squared singular values of that map. If a map \(M\) connects two spaces of unequal dimensions, the doubled operator
\begin{equation}
D_M=\begin{pmatrix}0&M^\dagger\\ M&0\end{pmatrix}
\end{equation}
has an exact chiral grading, while \(D_M^2\) contains \(M^\dagger M\) and \(MM^\dagger\). The non-zero spectrum of the adjoint square is a singular-value spectrum. Its endpoint at zero is hard because the eigenvalues cannot pass through zero into the negative axis.

This links QCD, the D\((-1)\)/D3 system, and the Type-IIB BPS endpoint. In QCD, \(M\) is the off-diagonal block of a Euclidean Dirac operator, and the topological charge fixes the number of exact zero modes. In the ADHM problem, \(M\) is a bifundamental instanton--brane block, and its singular values are microscopic instanton--brane coordinates. In the Type-IIB BPS endpoint problem, \(M\) is replaced by the first variation of the BPS map on the physical charge-sector domain. The field content and determinant powers differ across these examples, but the singular-value geometry is the same.

The singular-value geometry fixes the universality class. A Hermitian random matrix with an unconstrained endpoint has a soft spectral edge and Airy statistics. A positive square has a spectral endpoint at zero and Laguerre/Bessel statistics. The charge-sector endpoint analysis places the BPS endpoint Hessian of the wormhole coefficient in the second class. Figure~\ref{fig:C-airy-hard-edge} displays this distinction. Below, I derive the universal spectral law implied by the endpoint square and then test the ADHM block as its microscopic Type-IIB realization.

\begin{figure}[H]
\centering
\includegraphics[width=0.88\textwidth]{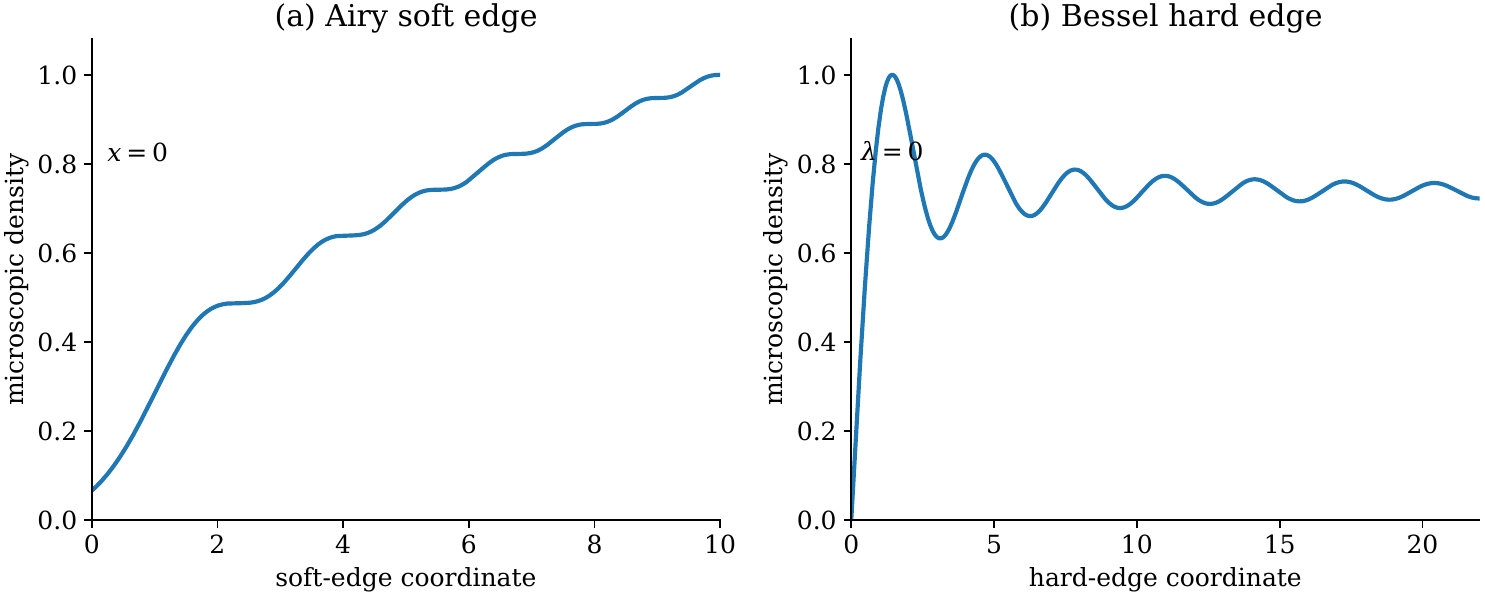}
\caption{Soft edge versus hard spectral edge. Panel (a) shows the Airy soft-edge density with the local coordinate chosen so that the spectral side lies at positive values of the local coordinate. Panel (b) shows the Bessel hard-edge density at a non-negative spectral endpoint. The BPS limit Hessian belongs to the second class because it is an adjoint square on the physical charge-sector domain. The relevant microscopic variables are squared singular values at a hard endpoint.}
\label{fig:C-airy-hard-edge}
\end{figure}

I must keep two Bessel mechanisms separate. Axion charge sectors give modified Bessel functions through Fourier analysis of \(\theta\) on a circle. A chiral hard spectral edge gives Bessel kernels through singular-value scaling at zero. These mechanisms have different origins, and their labels become related only after I match the charge, index, and microscopic domain.

The hard spectral edge is the spectral consequence of the BPS instanton endpoint. A non-BPS wormhole moves away from this endpoint by changing the radial first integral. At fixed charge \(\nu\), the BPS endpoint operator has spectrum bounded below by zero, so its eigenvalues accumulate at a fixed spectral endpoint. The corresponding universal ensemble is Laguerre/chiral, with Bessel edge statistics in place of Airy soft-edge statistics.

A singular value has a direct physical meaning in this endpoint problem. It measures the quadratic cost of a physical perturbation at the BPS instanton. The first-order BPS map measures the failure of the perturbation to satisfy the BPS equation, and the Hessian is the adjoint square of that failure. A small singular value identifies a near-BPS deformation direction toward the non-BPS wormhole branch. An exact zero singular value is a collective or gauge direction and is treated outside the determinant. The random-matrix variable is therefore the spectral coordinate of a near-BPS deformation.

\subsection{Charge sector and BPS limit Hessian}
\label{sec:paperA-input}
I state the spectral property of the BPS endpoint in the terminology of the companion analysis~\cite{Rey2026a}.

\begin{definition}[BPS limit Hessian]
Fix the conserved axion charge sector, equivalently the form-field flux sector, and let \(\Phi_0(\nu)\) be the \(E=0\) BPS instanton in that sector. Let \(\mathcal D_{\nu,\phys}\) denote the physical fluctuation domain obtained by imposing the Hamiltonian constraint, fixing gauge redundancies, imposing charge-sector boundary conditions, and removing collective zero modes. The BPS limit Hessian is
\begin{equation}
\cH_\nu=\cQ_\nu^\dagger\cQ_\nu,
\qquad
\cQ_\nu\eta=
\left.{\dd\over\dd\epsilon}\,
\cF\!\left[\Phi_0(\nu)+\epsilon\eta\right]\right|_{\epsilon=0},
\quad \eta\in\mathcal D_{\nu,\phys}.
\label{eq:BPS-limit-Hessian}
\end{equation}
Thus \(\cQ_\nu\) is the Fr\'echet derivative of the BPS map at the BPS instanton, restricted to the physical endpoint domain.
\end{definition}

This definition fixes four pieces of data. The charge sector is fixed. The operator is evaluated at the \(E=0\) BPS instanton; a generic \(E>0\) non-BPS wormhole has its own fluctuation operator. The Hessian \(\cH_\nu\) is the Schur-complement, or radial Sasaki--Mukhanov, physical Hessian after I remove the conformal-factor direction. The adjoint is defined with the inner product and boundary condition of the charge-sector endpoint problem.

\begin{proposition}[Endpoint positivity]
On \(\mathcal D_{\nu,\phys}\), the spectrum of the BPS limit Hessian is non-negative:
\begin{equation}
\operatorname{Spec}(\cH_\nu)\subset [0,\infty).
\label{eq:nonnegative-spectrum}
\end{equation}
The point \(0\) is the hard spectral edge.
\end{proposition}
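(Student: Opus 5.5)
The plan is to read the positivity directly off the adjoint-square form of Eq.~\eqref{eq:BPS-limit-Hessian}. The key preliminary is to make \(\cH_\nu=\cQ_\nu^\dagger\cQ_\nu\) a genuine self-adjoint operator, not merely a formal expression. I would view \(\cQ_\nu\) as a densely defined closed operator from the Hilbert completion of \(\mathcal D_{\nu,\phys}\), carrying the charge-sector inner product, into the target space of the BPS map \(\cF\). It is closable because it is a first-order differential operator (the Fr\'echet derivative of \(\cF\)) whose formal adjoint is also densely defined. Taking its closure, von Neumann's theorem makes \(\cQ_\nu^\dagger\cQ_\nu\) self-adjoint and non-negative on its natural domain \(\{\eta\in\mathrm{Dom}\,\cQ_\nu:\cQ_\nu\eta\in\mathrm{Dom}\,\cQ_\nu^\dagger\}\). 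Equivalently, \(\cH_\nu\) is the Friedrichs operator of the closed quadratic form \(\eta\mapsto\|\cQ_\nu\eta\|^2\). This form is exactly \(\delta^2S_{\phys}\) by Eq.~\eqref{eq:intro-square}.

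With self-adjointness in hand, the positivity is a one-line computation. For every \(\eta\) in the domain,
\begin{equation}
\langle\eta,\cH_\nu\eta\rangle=\langle\cQ_\nu\eta,\cQ_\nu\eta\rangle=\|\cQ_\nu\eta\|^2\ge 0 .
\end{equation}
The spectral theorem then places \(\operatorname{Spec}(\cH_\nu)\) inside the closure of the numerical range, which gives Eq.~\eqref{eq:nonnegative-spectrum}. The canonical double \(\cD_\nu\) of Eq.~\eqref{eq:intro-double} gives a consistency check. \(\cD_\nu\) is self-adjoint, so its square has non-negative spectrum, and the upper block of \(\cD_\nu^2\) is \(\cH_\nu\). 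Moreover the nonzero spectra of \(\cQ_\nu^\dagger\cQ_\nu\) and \(\cQ_\nu\cQ_\nu^\dagger\) coincide, so the nonzero eigenvalues are squared singular values \(\sigma^2\).

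For the hard-edge claim I must show that \(0\) is a finite boundary point of the spectrum that no eigenvalue can cross. Positivity already gives the one-sided bound. Zero is attained, or accumulated, in one of two ways. The first is the kernel \(\ker\cQ_\nu\cap\mathcal D_{\nu,\phys}\), which the collective-mode quotient is designed to remove, leaving only the index-determined zero modes counted by \(a_\nu\). The second is small singular values \(\sigma\to0\) of the near-BPS directions. In both cases \(0\) is the infimum of the spectrum's natural support, and the local variable is \(\sigma\) rather than \(\lambda=\sigma^2\), which is the hard-edge structure.

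The main obstacle is the first step: making sure the quadratic action really equals \(\|\cQ_\nu\eta\|^2\) with \emph{no leftover boundary term or indefinite conformal-factor piece}. Two things can go wrong. The Euclidean gravitational action has the conformal-factor problem, and the integration by parts producing \(\cQ_\nu^\dagger\cQ_\nu\) may generate a boundary form. I would handle the first by invoking the Hamiltonian-constraint (Sasaki--Mukhanov) reduction from the companion analysis, which eliminates the conformal direction before \(\cQ_\nu\) is defined, as stated after the definition. For the second, I would check that the charge-sector boundary condition makes the boundary form \(\langle\cQ_\nu\eta,\xi\rangle-\langle\eta,\cQ_\nu^\dagger\xi\rangle\) vanish on the physical domain, since that vanishing is what fixes the self-adjoint extension. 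If this boundary form does not vanish, I would instead define \(\cH_\nu\) as the Friedrichs extension of the form \(\|\cQ_\nu\eta\|^2\) and argue that it is the physically selected extension.
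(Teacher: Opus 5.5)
Your proposal is correct and is essentially the paper's proof, which consists of the computation $\langle\eta,\cH_\nu\eta\rangle=\|\cQ_\nu\eta\|^2\ge0$ plus the remark that collective zero modes are quotiented out or treated separately. The boundary-term and conformal-factor issues you flag concern the identification $\delta^2S_{\phys}=\|\cQ_\nu\eta\|^2$, which the paper imports from the companion analysis rather than proving here. Your extra step, taking $\cQ_\nu$ closed so that $\cQ_\nu^\dagger\cQ_\nu$ is self-adjoint by von Neumann's theorem (equivalently, the operator of the closed form $\|\cQ_\nu\eta\|^2$), is what rigorously turns non-negativity of the quadratic form into $\operatorname{Spec}(\cH_\nu)\subset[0,\infty)$; the paper leaves this implicit.
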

\begin{proof}
For any physical perturbation \(\eta\) in the domain,
\begin{equation}
\langle \eta,\cH_\nu\eta\rangle
=\langle \cQ_\nu\eta,\cQ_\nu\eta\rangle\ge 0.
\end{equation}
The zero modes are collective BPS directions and are removed by quotienting or treated separately. The inequality holds for the resulting physical endpoint operator and is separate from the full fluctuation problem of an \(E>0\) non-BPS saddle.
\end{proof}

This positivity statement identifies the spectral problem of this paper. The hard spectral edge follows from the operator equation~\eqref{eq:BPS-limit-Hessian}; the throat picture and dilute instanton-gas approximation play no role in deriving it. The non-BPS wormholes with \(E>0\) enter as deformations away from the endpoint, and their full Hessians have their own boundary forms, ensembles, and contours.

The physical interpretation follows immediately. The BPS instanton is the endpoint of the radial family. A non-BPS wormhole begins when an admissible physical deformation opens away from that endpoint. The square-root operator measures the first-order failure of the BPS equations along such a deformation, and the Hessian measures the square of that failure. The spectrum near zero is the spectrum of near-BPS deformation directions. The hard-edge density counts how densely these directions accumulate after I impose charge conservation, gauge quotienting, the Hamiltonian constraint, and the collective-coordinate quotient.

\subsection{From the adjoint square to a chiral hard spectral edge}
\label{sec:square-to-hard}
Let $\cQ_\nu$ be a closed Fredholm representative of the BPS map on the endpoint domain. One solves the eigenvalue equation $\cQ_\nu^\dagger\cQ_\nu u_j=s_j^2u_j$ and gets the non-zero singular values $s_j>0$. The eigenvalues of $\cH_\nu$ are $\lambda_j=s_j^2$. Thus $\lambda=0$ is the hard spectral edge of the adjoint square.

For a finite-mode truncation, I replace $\cQ_\nu$ by a matrix $W$ between two finite-dimensional physical spaces. If the two spaces have dimensions $n$ and $n+a$, then $a$ singular values are forced to zero on one side of the chiral double. The positive eigenvalues $\lambda_i=s_i^2$ have the Laguerre joint density
\begin{equation}
P_{a,\beta}(\lambda)
=\frac{1}{Z}\prod_{i<j}|\lambda_i-\lambda_j|^\beta
\prod_i \lambda_i^{\frac\beta2(a+1)-1}
\e^{-\frac\beta4 V(\lambda_i)}.
\label{eq:Laguerre-density}
\end{equation}
Here $\beta=1,2,4$ records the real, complex, or quaternionic antiunitary class. The potential $V$ fixes the non-universal regulator and scale. The microscopic kernel at the origin depends only on the hard-edge data. For $\beta=2$, after the standard hard-edge scaling, the kernel is the Bessel kernel
\begin{equation}
K_a(\zeta,\xi)=
\frac{\sqrt{\zeta\xi}}{\zeta^2-\xi^2}
\left[
\zeta J_{a+1}(\zeta)J_a(\xi)-
\xi J_{a+1}(\xi)J_a(\zeta)
\right].
\label{eq:Bessel-kernel}
\end{equation}
The diagonal density is
\begin{equation}
\rho_{s,a}(\zeta)
=\frac{\zeta}{2}\left[J_a(\zeta)^2-J_{a+1}(\zeta)J_{a-1}(\zeta)\right]
\sim \frac{\zeta^{2a+1}}{\Gamma(a+1)\Gamma(a+2)}.
\label{eq:Bessel-density}
\end{equation}
The exponent $2a+1$ is the most direct observable of the hard-edge index. Figure~\ref{fig:C-hard-edge-density} shows the microscopic density for several indices.

\begin{figure}[H]
\centering
\includegraphics[width=0.82\textwidth]{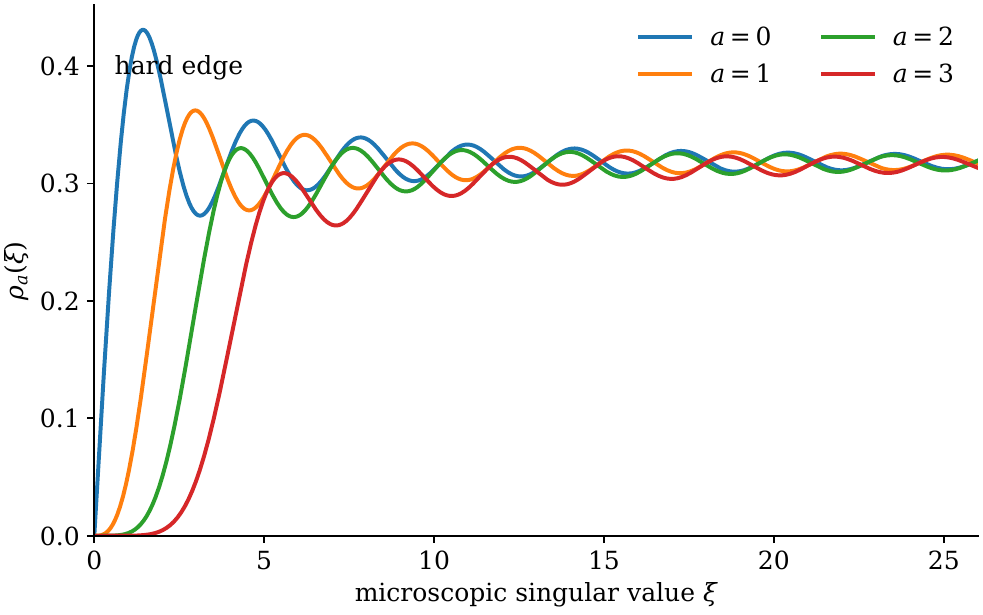}
\caption{Microscopic hard-edge density of the chiral/Wishart ensemble for several values of the hard-edge index. The origin is a spectral endpoint of a non-negative singular-value problem. Increasing the index strengthens the repulsion from zero. In the Type-IIB application this index is the hard-edge index of the BPS limit map after the charge sector, endpoint domain, zero-mode quotient, antiunitary class, and reduction data have been matched.}
\label{fig:C-hard-edge-density}
\end{figure}

\begin{theorem}[BPS square to Laguerre hard spectral edge]
Assume that a sequence of finite-dimensional truncations of the BPS limit map $\cQ_\nu$ preserves its Fredholm index, antiunitary class, and self-adjoint endpoint boundary form. Then the microscopic spectrum of the physical Hessian $\cH_\nu=\cQ_\nu^\dagger\cQ_\nu$ at zero lies in the Laguerre, or chiral, hard-edge universality class with index
\begin{equation}
a_\nu=\operatorname{ind}_{\rm edge}\cQ_\nu.
\label{eq:a-nu}
\end{equation}
For $\beta=2$ the limiting kernel is Eq.~\eqref{eq:Bessel-kernel}; for $\beta=1,4$ it is the corresponding Pfaffian Bessel kernel.
\end{theorem}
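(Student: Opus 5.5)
The argument goes through the finite-dimensional chiral normal form and then invokes hard-edge universality for the Laguerre ensemble; no further analysis of the continuum operator is needed. First, I would use Proposition~1 (endpoint positivity) and the Fredholm assumption to write the spectral problem of $\cH_\nu$ as a singular-value problem for $\cQ_\nu$. With $\cQ_\nu$ closed and Fredholm on $\mathcal D_{\nu,\phys}$, the polar (singular-value) decomposition gives $\cH_\nu u_j=s_j^2u_j$ and $\cQ_\nu\cQ_\nu^\dagger v_j=s_j^2v_j$, where $v_j=s_j^{-1}\cQ_\nu u_j$ for $s_j>0$. The non-zero spectra of the two adjoint squares therefore coincide. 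Their kernels differ in dimension by $\operatorname{ind}\cQ_\nu=\dim\ker\cQ_\nu-\dim\ker\cQ_\nu^\dagger$. I would define $a_\nu=\operatorname{ind}_{\rm edge}\cQ_\nu$ as this index, taken after the collective zero modes have been quotiented, so that it counts the zero modes forced on one side of the chiral double $\cD_\nu$ in Eq.~\eqref{eq:intro-double}.

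Next, I would take the truncations $W_n:\C^{n}\to\C^{n+a_n}$, or their real or quaternionic analogues. The assumption that the truncations preserve the Fredholm index gives $a_n=a_\nu$ for all $n$. The assumption that they preserve the antiunitary class fixes $\beta\in\{1,2,4\}$, which tells us whether $W_n$ is real, complex, or quaternion-real. Preserving the self-adjoint endpoint boundary form ensures that $W_n^\dagger W_n$ is the truncation of $\cH_\nu$ rather than of some other extension, so that zero stays the spectral endpoint. I would then apply singular-value gauge fixing, $W_n=U\Sigma V^\dagger$, and compute the Jacobian (as in the Wishart/James computation). This gives the joint density Eq.~\eqref{eq:Laguerre-density} with exponent $\tfrac\beta2(a_\nu+1)-1$. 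The Vandermonde factor comes from the Jacobian, and the power of $\lambda_i$ comes from the $a_\nu$ extra rows. All remaining dependence of the weight is collected into $V$.

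Finally, I would take the microscopic limit $\lambda_i=\zeta_i^2/(4n\,\rho_0)$, with $\rho_0$ the macroscopic density scale fixed by $V$. For $\beta=2$ I would write the correlation kernel through Laguerre-type orthogonal polynomials for the weight $\lambda^{a}e^{-V/2}$ and use Christoffel--Darboux. The Mehler--Heine asymptotics $n^{-a}L_n^{(a)}(z/4n)\to z^{-a/2}J_a(\sqrt z)$ then yields Eq.~\eqref{eq:Bessel-kernel}. For $\beta=1,4$, the skew-orthogonal polynomial construction gives the Pfaffian kernel, whose entries are built from the same Bessel limit. The diagonal limit and the small-$\zeta$ expansion give Eq.~\eqref{eq:Bessel-density}, which confirms that the exponent is fixed by $a_\nu$.

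The main obstacle is universality in $V$. The truncation weight is not Gaussian and is not known explicitly. I would handle this by citing hard-edge universality for non-critical, real-analytic potentials, proved via Riemann--Hilbert asymptotics with a Bessel local parametrix at the origin. This requires assuming that the macroscopic density of the truncations has the generic inverse-square-root behavior at $0$. A second, subtler point is that $a_\nu$ must remain stable under truncation and collective-mode removal. This is exactly why the index is imposed as a hypothesis rather than derived.
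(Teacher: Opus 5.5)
Your proposal is correct and follows the same route as the paper's proof. Both arguments identify the spectrum of $\cH_\nu$ with the squared singular values of $\cQ_\nu$, pass to an index-preserving rectangular truncation, change variables to singular values to obtain the Laguerre density \eqref{eq:Laguerre-density}, and invoke Laguerre hard-edge universality to reach the Bessel kernel \eqref{eq:Bessel-kernel}. Your version only spells out the Christoffel--Darboux, Mehler--Heine and Riemann--Hilbert steps that the paper leaves implicit, and the conditions you flag are equally tacit in the paper's one-line appeal to universality: a regular, non-critical $V$ with $\lambda^{-1/2}$ macroscopic behaviour at the origin, and a truncation weight that depends on $W_n$ only through its singular values.
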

\begin{proof}
Equation~\eqref{eq:BPS-limit-Hessian} identifies the physical spectrum with the squared singular values of a first-order map. A finite-mode truncation that preserves the index gives a rectangular chiral block. The change of variables from matrix entries to squared singular values gives the Laguerre density~\eqref{eq:Laguerre-density}. Hard-edge universality for Laguerre ensembles then gives the Bessel kernel. The gravitational input is the endpoint analysis, which identifies the physical map $\cQ_\nu$ and its domain.
\end{proof}

This theorem separates the BPS endpoint from a soft Airy edge. An Airy edge is the soft endpoint of an unconstrained Hermitian spectrum. At the BPS endpoint, the adjoint-square structure fixes zero as a structural endpoint of the spectrum.

I assign a wormhole interpretation to each factor in the Laguerre density. The Vandermonde factor gives repulsion among independent near-BPS endpoint deformation directions: coincident singular directions reduce the orbit volume of the physical fluctuation space. The power of each eigenvalue at the origin records the unpaired directions of the square-root map. The exponential factor is a finite-mode or ultraviolet regulator. The endpoint observables are the hard-edge exponent, the Bessel kernel class, and the microscopic scale fixed by the compactification.

The ensemble provides the local normal form of one fixed charge-sector endpoint problem. The finite matrix replaces the infinite-dimensional map while preserving the index, antiunitary class, and boundary form. Universality enters because the microscopic edge discards the detailed regulator and retains the endpoint at zero together with the number of unpaired square-root directions.

\subsection{Bosonic determinant and chiral singular values}
\label{sec:bosonic}
We must distinguish two uses of the term ``chiral''. In QCD it refers to the anticommutation of the Dirac operator with $\gamma_5$ and to the fermion determinant. In the BPS endpoint Hessian it refers to the block form of the doubled square-root operator~\eqref{eq:intro-double}. In the present setting, chirality means the singular-value structure of the square root of a bosonic endpoint Hessian.

The Gaussian integrals have different determinant powers. A bosonic physical Hessian contributes schematically
\begin{equation}
Z_{\nu,\rm fluc}^{\rm bos}
\sim (\det{}'\cH_\nu)^{-1/2}
=(\det{}'\cQ_\nu^\dagger\cQ_\nu)^{-1/2},
\label{eq:bos-det}
\end{equation}
with zero modes and collective coordinates treated separately. A fermionic chiral Dirac operator contributes
\begin{equation}
Z_{\rm ferm}\sim \det(\cD+m).
\label{eq:ferm-det}
\end{equation}
The determinant powers and zero-mode insertions differ, while the microscopic singular-value kernel has the same mathematical form because both problems reduce near zero to squared singular values.

The BPS endpoint Hessian is bosonic, and its square root is a first-order map. That first-order map has a chiral double. The hard-edge classification follows from this double.

The bosonic determinant changes the one-loop weight, especially the weight assigned to very small eigenvalues. It leaves the geometric origin of the eigenvalues unchanged. The same singular values that enter a chiral Dirac determinant in QCD enter here through an inverse square-root determinant after I separate zero modes. The QCD correspondence is therefore a correspondence of singular-value universality. Determinant powers, condensates, and fermionic zero-mode saturation are fixed by the Type-IIB degrees of freedom in the coefficient calculation.

Equation~\eqref{eq:bos-det} supplies one factor in the final coefficient. In a supersymmetric BPS sector, paired non-zero bosonic and fermionic fluctuations cancel, up to ghosts, endpoint domains, anomalies, and lifted modes. The inverse square root displays the bosonic square. The complete coefficient $W_\nu[b]$ also contains the supergravity fermion Pfaffian, ghost determinants, collective-coordinate Jacobians, and fermionic zero-mode insertions. In protected sectors these factors reduce the non-zero-mode ratio and leave a finite-dimensional supermoduli integral. The microscopic model is therefore the super-ADHM integral, with the purely bosonic Wishart integral appearing as its hard-edge bosonic sector.

\subsection{The sector coefficient represented by the matrix model}
\label{sec:wnu-target}
The matrix model targets the reduced coefficient defined in the coefficient analysis. Let $C_\nu^{ij}$ be the unreduced coefficient matrix of the far-distance two-end source term in charge sector $\nu$. The labels $i,j$ refer to end-insertion operators, while parent-universe labels record the placement of those insertions. The reduction data $b$ reduce this coefficient matrix in end-insertion label space to the scalar coefficient
\begin{equation}
W_\nu[b]=R_b\!\left(C_\nu^{ij}\right).
\label{eq:w-reduction-C}
\end{equation}
I form the theta series after this reduction. The matrix integral in this paper represents the fixed-$\nu$ calculation that produces $W_\nu[b]$, or equivalently the unreduced coefficient before $R_b$ is applied. The matrix integral is part of the coefficient calculation before the charge sum forms $Z_{\rm wh}(\theta;b)$.

The gravitational expression for $W_\nu[b]$ is a Type-IIB path integral in a chosen axion, or form-field, charge sector. Schematically,
\begin{equation}
W_\nu[b]
=R_b\left[
\int_{\nu}Dg\,D\Phi\,D\Psi\,D({\rm ghosts})
\exp(-S_{\rm IIB,Routhian})
\right].
\label{eq:wnu-IIB-path}
\end{equation}
Here $\Phi$ denotes bosonic supergravity fields and $\Psi$ denotes the fermionic fields. The formula is schematic because the asymptotic data, contour, zero-mode insertions, collective-coordinate measure, source normalization, and reduction operation are part of $b$. This expression is the counterpart of the fixed-topology QCD gauge-field integral. In QCD, chiral random-matrix theory replaces the universal microscopic part of the fixed-topology gauge-field integral. Here the ADHM/chiral-Wishart matrix model replaces the universal microscopic part of the fixed-charge Type-IIB integral that contributes to $W_\nu[b]$.

This identification also fixes the role of the classical action. The chiral or Wishart matrix integral represents the microscopic sector contribution, with the classical action matched as part of the coefficient. After matching, the matrix model represents the microscopic sector coefficient, including the action normalization, the supermoduli measure, and the zero-mode saturation appropriate to the chosen observable. The overall normalization and the precise map from the Type-IIB charge sector to the finite-dimensional matrix-model parameters are matching data. The rectangular hard-edge structure is the universal part.

\section{ADHM microscopic description and finite-dimensional chiral ensembles}
\label{sec:adhm-group}
\subsection{Instanton calculus and the finite-dimensional chiral block}
\label{sec:instanton-calculus}
The ADHM construction provides a finite-dimensional model in which the singular-value variables appear before any continuum spectral limit. For $k$ D$(-1)$ branes and $N$ D3 branes, the bifundamental fields are matrices connecting the instanton vector space and the colour vector space. Their rectangularity produces the Laguerre exponent at the origin.

The supersymmetric ADHM quotient contains adjoint variables, bifundamentals, constraints, gauge orbits, and fermionic partners. The present analysis uses the localized radial part of the bifundamental block. Localization in the supersymmetric D$(-1)$/D3 system reduces this radial part to a singular-value integral, and that integral is the finite Laguerre ensemble. The real, complex, or quaternionic structure of the bifundamental fixes the Dyson index. The rank defect fixes the rectangular exponent. The finite-volume closed forms are Toeplitz or Pfaffian Bessel functions.

The gravitational endpoint operator provides the continuum square, while the ADHM variables provide a microscopic Type-IIB realization of a chiral block. The comparison is a matching problem between two singular-value descriptions. The charge label, instanton number, rectangular defect, and hard-edge index retain distinct definitions until the matching theorem relates them.

I can test this dictionary. The ADHM side predicts a Dyson class, a rectangular exponent, finite-size Laguerre polynomials, and Bessel/Toeplitz closed forms. The gravitational side must reproduce the same hard-edge index and antiunitary class from the BPS endpoint map. A mismatch places the two systems in different microscopic hard-edge classes, even though both remain Laguerre-type singular-value problems.

I can give the ADHM block a direct wormhole interpretation. D$(-1)$/D3 strings provide finite-dimensional coordinates for charged instanton degrees of freedom before those degrees of freedom appear as continuum radial fluctuations. Their singular values are microscopic radial variables in the charged instanton moduli problem, distinct from macroscopic throat radii. These variables can survive at the BPS endpoint as hard-edge coordinates. ADHM therefore gives a microscopic Type-IIB origin for the same singular-value geometry that the BPS endpoint Hessian exposes in the continuum. The ADHM contribution remains tied to $W_\nu[b]$ because it describes the microscopic sector integral before I assemble the charge-sector coefficients into $Z_{\rm wh}(\theta;b)$.

\subsection{The large-$N$ super-ADHM bridge to $AdS_5\times S^5$}
\label{sec:dorey-largeN}
The large-$N$ result of Dorey, Hollowood, Khoze, Mattis, and Vandoren supplies the microscopic Type-IIB anchor. In $\mathcal N=4$ $SU(N)$ Yang--Mills theory, the complete $k$-instanton collective-coordinate integral is the supersymmetric ADHM integral. In the large-$N$ limit at fixed instanton number $k$, the collective-coordinate space collapses to a single copy of $AdS_5\times S^5$. The $AdS_5$ coordinates are the instanton position and size, and the $S^5$ coordinates arise from the auxiliary variables used to treat the fermionic quadrilinear interaction. The centered degrees of freedom contribute the partition function $Z_k$ of ten-dimensional $\mathcal N=1$ $SU(k)$ super-Yang--Mills theory reduced to zero dimensions \cite{Dorey1999,Dorey2000,Dorey2002}.

This result identifies the holographic quantity. The super-ADHM integral computes the $k$-D-instanton contribution to protected $AdS_5\times S^5$ correlators. For correlators that saturate the sixteen exact supersymmetric and superconformal fermion zero modes, the large-$N$ result has the form of a Type-IIB D-instanton Witten diagram: a D-instanton effective vertex is integrated over $AdS_5\times S^5$ and attached to boundary operators by bulk-to-boundary propagators. Schematically,
\begin{equation}
G_n\big|_{k{\rm -inst}}
\sim e^{2\pi i k\tau} Z_k
\int_{AdS_5\times S^5}
d^4x_0\,\frac{d\rho}{\rho^5}\,d\Omega_5\,d^{16}\Xi\,
\mathcal I_n(x_\ell;x_0,\rho,\Omega,\Xi).
\label{eq:Dorey-AdS-measure}
\end{equation}
The exact power of $N$, the power of $k$, and the divisor sum depend on the protected correlator. The structural conclusion is independent of these details: in this protected sector, the large-$N$ ADHM integral is the semiclassical Type-IIB D-instanton path integral on $AdS_5\times S^5$.

I use this result as the microscopic representative of the protected Type-IIB charge-sector coefficient. The coefficient entering the wormhole partition function is a super-ADHM collective-coordinate integral after the reduction data $b$ have selected the insertions, saturated the fermion zero modes, fixed the contour, and reduced the two-end coefficient. In the Dorey setting this gives the template
\begin{equation}
w_k[b]
\sim e^{2\pi i k\tau} Z_k
\int_{AdS_5\times S^5}d\mu_{\rm D-inst}\,d^{16}\Xi\,
\mathcal I_b.
\label{eq:wnu-Dorey-template}
\end{equation}
In this sense, the ADHM matrix model represents the charge-sector coefficient. The chiral/Wishart hard spectral edge is a further reduction of the rectangular sector of this supermatrix integral. The full supergravity coefficient also contains the collective-coordinate integral, zero-mode saturation, insertions, and reduction data. Equation~\eqref{eq:wnu-Dorey-template} gives the microscopic Type-IIB representative of the sector coefficient that the coefficient analysis later sums over charge.

This result also fixes the role of fermions. In a protected BPS sector, paired non-zero bosonic and fermionic modes cancel. The surviving object is the supermoduli integral with exact fermion zero modes, constrained ADHM variables, and the centered matrix-model partition function $Z_k$. Reduction data that fail to saturate the exact zero modes give a vanishing coefficient. Reduction data that saturate them give a D-instanton contribution to a protected Type-IIB amplitude. This zero-mode-saturated supermoduli integral is the microscopic object compared with the reduced coefficient $W_\nu[b]$.

\subsection{ADHM data and the Dyson index}
\label{sec:adhm-data}
I start the microscopic Type-IIB realization with $k$ D-instantons, or D$(-1)$-branes, in the background of $N$ D3-branes. In the unitary case the ADHM data are
\begin{equation}
B_1,B_2\in \End(\C^k),
\qquad
I\in\Hom(\C^k,\C^N),
\qquad
J\in\Hom(\C^N,\C^k),
\end{equation}
subject to
\begin{align}
\mu_\C&=[B_1,B_2]+IJ=0,\label{eq:ADHM-complex}\\
\mu_\R&=[B_1,B_1^\dagger]+[B_2,B_2^\dagger]+II^\dagger-J^\dagger J
=\zeta\,\mathbf 1_k.\label{eq:ADHM-real}
\end{align}
The moduli space is the hyper-Kahler quotient
\begin{equation}
\cM_{k,N}=\mu^{-1}(0,\zeta)/G_k.
\end{equation}
For a $U(N)$ gauge group its real dimension is $4kN$. The integer $k$ is the instanton number. The instanton number and the rectangular defect of a matrix block are distinct data.

The bifundamental block carries the reality type. In the three standard cases,
\begin{center}
\begin{tabular}{c c c c c}
\toprule
D3 gauge group & reality of $I$ & instanton group $G_k$ & Dyson $\beta$ & chiral class \\
\midrule
$U(N)$ & complex & $U(k)$ & $2$ & AIII \\
$SO(N)$ & real & $O(k)$ & $1$ & BDI \\
$Sp(N)$ & quaternionic & $Sp(k)$ & $4$ & CII \\
\bottomrule
\end{tabular}
\end{center}
The table records the antiunitary class of the bifundamental singular-value problem. I match the labels $k$, $N$, axion charge, and hard-edge index separately.

\subsection{Singular-value gauge fixing of the bifundamental}
\label{sec:svd}
The instanton gauge group acts on the right of $I$. Singular-value gauge fixing removes this redundancy directly. For $N\ge k$ write
\begin{equation}
I=U\Sigma V^\dagger,
\qquad
\Sigma=\begin{pmatrix}\diag(\sigma_1,\ldots,\sigma_k)\\0\end{pmatrix},
\qquad
\sigma_a\ge 0.
\label{eq:SVD}
\end{equation}
Here $V\in G_k$ is gauged, $U$ is a colour rotation, and the gauge-invariant radial variables are
\begin{equation}
\lambda_a=\sigma_a^2.
\end{equation}
The rectangular defect is
\begin{equation}
r=N-k \qquad (N\ge k),
\label{eq:r-defect}
\end{equation}
with the transpose convention used when $k>N$.

For the complex case, the flat metric is $\dd s^2=\Tr(\dd I^\dagger\dd I)$. Differentiating Eq.~\eqref{eq:SVD} and moving to the body-fixed frame gives
\begin{equation}
\delta I=\omega\Sigma+\dd\Sigma+\Sigma\eta,
\qquad
\omega=U^\dagger\dd U\in\mathfrak u(N),
\qquad
\eta=\dd V^\dagger V\in\mathfrak u(k).
\end{equation}
For an unordered pair $a\neq b$ in the upper $k\times k$ block,
\begin{equation}
\begin{pmatrix}(\delta I)_{ab}\\[2pt]\overline{(\delta I)_{ba}}\end{pmatrix}
=
\begin{pmatrix}\sigma_b&\sigma_a\\ \sigma_a&\sigma_b\end{pmatrix}
\begin{pmatrix}\omega_{ab}\\ \eta_{ab}\end{pmatrix}.
\end{equation}
As a real map this contributes $|\sigma_b^2-\sigma_a^2|^2$. The lower block contributes $\sigma_a^{2r}$. Converting $\sigma_a$ to $\lambda_a$ gives
\begin{equation}
\dd I=c\,
\prod_{a<b}(\lambda_a-\lambda_b)^2
\prod_a\lambda_a^{r}
\prod_a\dd\sigma_a\,\dd U\,\dd V
=c'\,
\prod_{a<b}(\lambda_a-\lambda_b)^2
\prod_a\lambda_a^{r}
\prod_a\lambda_a^{-1/2}\dd\lambda_a\,\dd U\,\dd V.
\end{equation}
I perform the same orbit-volume computation in the real and quaternionic cases and get the James--Hua formula
\begin{equation}
\boxed{
\dd I=c_{N,k,\beta}
\prod_{a<b}|\lambda_a-\lambda_b|^\beta
\prod_a\lambda_a^{\frac\beta2(r+1)-1}
\prod_a\dd\lambda_a\,\dd U\,\dd V.}
\label{eq:James-Hua}
\end{equation}
Equation~\eqref{eq:James-Hua} gives the singular-value Jacobian that underlies the Laguerre measure. The Vandermonde factor is kinematic: it is the Riemannian volume of the gauge orbit written in singular-value coordinates. The power of $\lambda$ is fixed by the rectangular defect. The derivation uses only the singular-value geometry of the bifundamental block.

For the wormhole problem, this formula is the finite-dimensional prototype of the endpoint measure. The Vandermonde factor means that independent charged near-BPS deformation directions repel because the gauge orbit degenerates when two singular values coincide. The factor $\lambda^{\frac\beta2(r+1)-1}$ records how many directions on one side of the first-order block lack partners on the other side. In the continuum BPS problem this role is played by the edge index of $\cQ_\nu$. The ADHM calculation therefore states exactly what the gravitational computation must supply: the endpoint, the index, and the real structure of the physical endpoint domain.

\begin{remark}
The singular-value slice is adapted to the microscopic throat-opening variables. It exposes the chiral random-matrix structure of the ADHM bifundamental directly. This construction is the main technical input from the source notes retained here.
\end{remark}

\subsection{Localization and the finite Laguerre integral}
\label{sec:localization}
In the maximally supersymmetric D$(-1)$/D3 system, the ADHM matrix model has a nilpotent supercharge, denoted here by $\mathsf Q$ to distinguish it from the BPS map $\cQ_\nu$. The exact equality below belongs to that supersymmetric setting. In a less supersymmetric axionic background, the equality requires the same cancellations to survive.

The adjoint variables $B_1,B_2$ and their superpartners cancel off the kernel. The complex moment-map constraint is imposed by a $\mathsf Q$-exact pair, and the real moment-map equation is solved by the singular-value variables. The remaining radial integral is the bifundamental sector. With masses $x_f$ for the relevant flavour insertions one obtains
\begin{equation}
Z^{\ADHM}_{r,\beta}(\{x_f\})
=C\int_0^\infty\left(\prod_{a=1}^{s}\dd\lambda_a\right)
\prod_{a<b}|\lambda_a-\lambda_b|^\beta
\prod_a\lambda_a^{\frac\beta2(r+1)-1}
\e^{-\frac\beta4\lambda_a}
\prod_f(\lambda_a+x_f^2).
\label{eq:ADHM-Laguerre}
\end{equation}
Here
\begin{equation}
s=\min(N,k),
\qquad
r=|N-k|.
\end{equation}
Equation~\eqref{eq:ADHM-Laguerre} is a finite chiral random-matrix integral. Its topological-looking exponent is the rectangular defect $r$, while the physical instanton number remains $k$.

I must keep this distinction clear for the gravitational application. The axion charge sector is $\nu$. The gravitational hard-edge index of the BPS endpoint Hessian is $a_\nu$. The ADHM rectangularity is $r$. A holographic dictionary can set $a_\nu=r$ in a specified sector only after I derive that equality from the charge map and endpoint domain.

Equation~\eqref{eq:ADHM-Laguerre} gives the precise sense in which ADHM produces chiral RMT. The localized matrix model provides a regulated ensemble of charged microscopic modes at the instanton endpoint. Its hard spectral edge is the degeneracy locus of the bifundamental singular values, and its Bessel index is the rectangular defect. If the gravitational BPS endpoint Hessian flows to the same edge, the wormhole endpoint inherits this microscopic brane structure. If the gravitational edge carries different data, the ADHM integral remains a hard-edge model with a different index or symmetry class.

\subsection{Supersymmetric matter, Wishart integrals, and Bessel functions}
\label{sec:sqcd-wishart}
The relevant supersymmetric gauge-theory precedent comes from supersymmetric QCD with fundamental matter. Pure super-Yang--Mills contains a gluino and leads, as a Dirac-spectrum problem, to the usual chiral random-matrix classes with the fermion in the adjoint representation. Supersymmetric QCD with fundamental matter contains quarks, squarks, and gauginos, and the matrix-model description of the matter sector naturally involves constrained rectangular variables. In the Dijkgraaf--Vafa treatment of SQCD matter, the fundamental variables $Q$ and $\tilde Q$ are replaced by gauge-invariant mesonic variables, and the constrained matrix integral is equivalent to a complex Wishart integral. This Wishart measure gives the matter part of the Affleck--Dine--Seiberg and Veneziano--Yankielowicz--Taylor structure \cite{DemasureJanik2003}.

This precedent matters because a Wishart integral is the Laguerre hard-edge ensemble used in chiral random-matrix theory. When a rectangular matrix $X$ is replaced by $X^\dagger X$, the eigenvalue density contains a Vandermonde factor and a power of each eigenvalue at the origin. The hard-edge scaling limit is governed by Bessel functions. Different observables display different Bessel functions: ordinary Bessel functions appear in spectral kernels, modified Bessel $I$ functions appear in compact fixed-sector partition functions, and modified Bessel $K$ functions can appear in noncompact or inverse-determinant sectors. A constrained Wishart integral therefore has a Bessel microscopic hard-edge regime even when a chosen holomorphic observable hides the Bessel function.

I apply this lesson to the ADHM problem in a specific way. The rectangular D$(-1)$/D3 bifundamental becomes a Wishart-type variable after singular-value gauge fixing. In the full supersymmetric problem it is embedded in a super-ADHM measure with fermionic constraints and zero modes. The bosonic rectangular block provides the Laguerre/Bessel hard-edge limit. The fermionic variables determine the numerator, the zero-mode saturation rule, and the possible selection-rule zeros. That SQCD lesson is the one needed for the Type-IIB ADHM coefficient.

\subsection{Worked ADHM examples and finite-size examples}
\label{sec:examples}
The rectangular data are elementary, but I record them to prevent later index errors. The following table uses the unitary case and the convention $r=|N-k|$, $s=\min(N,k)$.
\begin{center}
\begin{tabular}{c c c c p{4.3cm}}
\toprule
$(N,k)$ & $\dim \cM_{k,N}$ & $r=|N-k|$ & $s$ & singular-value content \\
\midrule
$(2,1)$ & $8$ & $1$ & $1$ & one radial eigenvalue, no Vandermonde \\
$(3,1)$ & $12$ & $2$ & $1$ & one radial eigenvalue with larger rectangular exponent \\
$(2,2)$ & $16$ & $0$ & $2$ & two eigenvalues with Vandermonde repulsion \\
$(4,2)$ & $32$ & $2$ & $2$ & two eigenvalues and non-zero hard-edge index \\
$(3,3)$ & $36$ & $0$ & $3$ & square block with three singular values \\
$(5,2)$ & $40$ & $3$ & $2$ & two singular values and rectangular defect three \\
\bottomrule
\end{tabular}
\end{center}
The dimension $4kN$ is the ADHM moduli-space dimension in the unitary case. The number of singular values is $s$. The Laguerre exponent is controlled by $r$. The instanton number is $k$. These quantities enter different parts of the finite-size example.

Finite size introduces a second distinction, which I must track. At fixed finite $k$, the integral~\eqref{eq:ADHM-Laguerre} is a Gaussian-damped polynomial in the masses. The leading small-mass term of that finite polynomial can differ from the microscopic Bessel power. The Bessel order is a statement about the chiral microscopic limit. The finite ADHM integral and the limiting chiral partition function are therefore connected by the standard double scaling, while the first nonzero power of a finite polynomial by itself carries less information.

These examples have direct wormhole use because they separate three integer labels. The number of microscopic radial variables is controlled by $s$. The suppression or enhancement near the singular-value endpoint is controlled by $r$. The gravitational charge sector is controlled by $\nu$. A throat calculation that uses $k$ as the Bessel order, or identifies $N-k$ with axion charge before deriving the charge map, collapses distinct data that the endpoint problem requires. The finite examples function as calibration cases for the coefficient calculation: they show which integer labels enter the ADHM block before I assert any matching to the gravitational charge sector.

\subsection{The Toeplitz determinant from the Andr\'eief identity}
\label{sec:andreief-derivation}
The determinant formula below gives the finite-volume form of the chiral unitary partition function. It fixes the Bessel minors that appear in the microscopic representative of $W_\nu[b]$. For the unitary class, the microscopic flavour integral is
\begin{equation}
Z_r^{(N_f)}(x)=\int_{U(N_f)}\dd\mu(U)\,(\det U)^r
\exp\left[\frac{x}{2}\Tr(U+U^{-1})\right].
\label{eq:unitary-start-andreief}
\end{equation}
Diagonalizing $U=V\operatorname{diag}(e^{i\theta_1},\ldots,e^{i\theta_{N_f}})V^\dagger$ gives
\begin{equation}
Z_r^{(N_f)}(x)= \frac{1}{N_f!}
\int \prod_{a=1}^{N_f}\frac{\dd\theta_a}{2\pi}
\prod_{a<b}|e^{i\theta_a}-e^{i\theta_b}|^2
\prod_a e^{ir\theta_a+x\cos\theta_a}.
\label{eq:unitary-eigen-angle}
\end{equation}
Writing the Vandermonde as two determinants and applying the Andr\'eief identity gives
\begin{equation}
Z_r^{(N_f)}(x)=
\det_{1\le i,j\le N_f}\left[
\int_0^{2\pi}\frac{\dd\theta}{2\pi}
e^{i(r+i-j)\theta+x\cos\theta}
\right]
=\det_{1\le i,j\le N_f}\left[I_{r+i-j}(x)\right].
\label{eq:andreief-toeplitz}
\end{equation}
The determinant is Toeplitz because the matrix element depends on $i-j$. A determinant with $i+j$ is a Hankel determinant and computes a different integral. This algebraic distinction separates the charge-sector Bessel function from the chiral finite-volume partition function.

The Toeplitz structure records compact angular variables in the chiral finite-volume problem. It is the chiral finite-volume remnant of the unitary zero-mode integral. I obtain the hard-edge density after analytic continuation to the spectral axis and microscopic scaling. The same Bessel functions appear in both places, but the observables are different.

In the wormhole language, this is where compactness re-enters the microscopic edge. The angle integral is the zero-mode angular integral of the chiral finite-volume problem, distinct from the spacetime theta dependence of \(Z_{\rm wh}(\theta;b)\). The Toeplitz determinant says that the hard-edge limit retains compact phase memory, just as the axionic sector sum retains compact charge memory. The first structure counts the spectral response of endpoint modes inside a chosen charge sector. The second structure counts charge transfer between the two ends. Both structures enter the coefficient problem at different steps. This separation keeps the special-function formulas attached to their physical objects: Toeplitz determinants belong to the microscopic edge of $W_\nu[b]$, while the theta sum belongs to the final wormhole partition function.

\subsection{Colour--flavour transformation and Bessel closed forms}
\label{sec:toeplitz}
The finite Laguerre integral admits the standard colour--flavour representation. For $\beta=2$, the microscopic limit with rescaled masses gives the unitary coset integral
\begin{equation}
Z_r^{\beta=2,N_f}(x)=
\int_{U(N_f)}\dd\mu(U)\,(\det U)^r
\exp\left[\frac{x}{2}\Tr(U+U^{-1})\right].
\label{eq:unitary-coset}
\end{equation}
At one flavour this reduces to
\begin{equation}
Z_r(x)=\int_0^{2\pi}\frac{\dd\alpha}{2\pi}
\e^{ir\alpha+x\cos\alpha}=I_r(x).
\label{eq:one-flavour-Bessel}
\end{equation}
The function $I_r$ in Eq.~\eqref{eq:one-flavour-Bessel} is a modified Bessel function in a chiral finite-volume partition function. A charge-sector Skellam law also contains modified Bessel functions. The two appearances are related by compact Fourier analysis, but equality of their labels is a matching condition. The ADHM label is $r$; the axion charge label is $\nu$.

The colour--flavour transformation converts the radial singular-value problem into an angular zero-mode problem. Physically, this transformation changes coordinates on the microscopic fluctuation integral while leaving the wormhole saddle fixed. On one side the endpoint modes appear as radial singular values; on the other side they are encoded in compact flavour angles. This mirrors the larger axion/form-field relation while keeping the two transformations distinct: compact variables and charge variables become interchangeable only after I specify the transform, contour, asymptotic data, and reduction data.

For general $N_f$ and $\beta=2$, I diagonalize $U$ and apply the Andr\'eief identity to get the Toeplitz determinant
\begin{equation}
\boxed{
Z_r^{\beta=2,N_f}(x)=
\det_{1\le i,j\le N_f}\left[I_{r+i-j}(x)\right].}
\label{eq:Toeplitz}
\end{equation}
The $i-j$ dependence is the character-theoretic Toeplitz structure of the unitary integral. The Hankel determinant $\det[I_{r+i+j}]$ computes a different object.

At two flavours, I write explicit closed forms for the three chiral symmetry classes. With the normalization used in the source notes,
\begin{align}
Z_r^{\beta=2}(x)&=I_r(x)^2-I_{r+1}(x)I_{r-1}(x),
\label{eq:beta2-two}\\
Z_r^{\beta=1}(x)&=\frac{1}{3}\left[I_{r-2}(x)I_{r+2}(x)-4I_{r-1}(x)I_{r+1}(x)+3I_r(x)^2\right],
\label{eq:beta1-two}\\
Z_r^{\beta=4}(x)&=\int_0^1 I_{2r}(2xt)\,\dd t
=\sum_{m\ge0}\frac{I_{r+m}(x)I_{r-m}(x)}{2m+1}.
\label{eq:beta4-two}
\end{align}
The $\beta=2$ expression is the Bessel Turan determinant. The orthogonal and symplectic formulas are Pfaffian/coset reductions written in scalar form at two flavours. They display the dependence on the reality class.

\begin{proposition}[Ordering in the two-flavour normalization]
For the positive normalization of Eqs.~\eqref{eq:beta1-two}--\eqref{eq:beta4-two}, numerical evaluation of the defining coset integrals and of the closed forms gives
\begin{equation}
0<Z_r^{\beta=1}(x)<Z_r^{\beta=2}(x)<Z_r^{\beta=4}(x)
\end{equation}
for the tested range of non-negative integer $r$ and positive $x$. The ordering records the increasing eigenvalue repulsion as the Vandermonde power rises.
\end{proposition}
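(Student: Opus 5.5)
The statement is explicitly numerical ("for the tested range"), so I would establish it in two layers. The first layer is a certified numerical verification. The second is a set of analytic checks at the two ends of the $x$-axis, which show that the ordering is structural rather than an accident of the sampled grid.

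\textbf{Step 1: validate the closed forms.} For each $\beta$ I would first check that the closed forms \eqref{eq:beta2-two}--\eqref{eq:beta4-two} agree with the defining two-flavour coset integrals.
\begin{itemize}
\item For $\beta=2$, this is the $N_f=2$ case of the Toeplitz determinant \eqref{eq:Toeplitz}. It follows exactly from the Andr\'eief identity of Section~\ref{sec:andreief-derivation}.
\item For $\beta=1,4$, I would evaluate the coset integrals over $U(2)/O(2)$ and $U(2)/Sp(1)$ by quadrature in the eigen-angles, using the appropriate Jacobians $|e^{i\theta_1}-e^{i\theta_2}|^\beta$ and the stated positive normalization. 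I would compare the results with the Bessel sums to relative error $\sim10^{-12}$.
\item For $\beta=4$, I would also check the equality of the integral and the series, $\int_0^1 I_{2r}(2xt)\,\dd t=\sum_m I_{r+m}I_{r-m}/(2m+1)$. This follows from the Neumann addition formula $I_{2r}(2y)=\sum_m I_{r+m}(y)I_{r-m}(y)$ after substituting $y=xt$; the $1/(2m+1)$ factor comes from the $t$-integration of the $t$-dependence of each term, which I would verify termwise from the power series.
\end{itemize}

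\textbf{Step 2: analytic checks.}
\begin{itemize}
\item \emph{Positivity.} $Z^{\beta=4}>0$ is immediate, since the integrand $I_{2r}(2xt)$ is positive. $Z^{\beta=2}>0$ is the classical Tur\'an inequality $I_r^2>I_{r-1}I_{r+1}$ for $x>0$. $Z^{\beta=1}>0$ I would check from its power series: expanding each $I_n(x)=\sum_j (x/2)^{n+2j}/(j!(n+j)!)$, the coefficient of $(x/2)^{2r+2m}$ becomes a finite alternating sum. I would show it is positive, or at least verify this for the tested $r$.
\item \emph{Small $x$.} The leading terms are
\[
Z^{\beta=2}\simeq \frac{(x/2)^{2r}}{r!\,(r+1)!},\qquad Z^{\beta=4}\simeq \frac{x^{2r}}{(2r+1)!},
\]
and $Z^{\beta=1}$ is analogous. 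Comparing coefficients gives the ordering for $r\ge1$.
\item \emph{Large $x$.} Using $I_n(x)\sim e^x/\sqrt{2\pi x}\,\bigl(1-(4n^2-1)/(8x)+\dots\bigr)$, the Tur\'an combination cancels at leading order and behaves as $e^{2x}/(2\pi x^2)$. The $\beta=4$ integral behaves as $e^{2x}x^{-3/2}$, so $Z^{\beta=4}/Z^{\beta=2}\to\infty$. For $\beta=1$, the same expansion must be carried one order further, because the $x^{-1}$ terms cancel. I expect it to give $e^{2x}x^{-5/2}$, so that $Z^{\beta=1}/Z^{\beta=2}\to0$.

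This matches the interpretation that stronger Vandermonde repulsion produces weaker suppression of the flavour determinant.
\end{itemize}

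\textbf{Step 3: the intermediate range.} Between the two asymptotic regimes I would sample a grid in $x$ for each tested $r$, using a high-precision Bessel library to compute the two differences $Z^{\beta=2}-Z^{\beta=1}$ and $Z^{\beta=4}-Z^{\beta=2}$. The asymptotic regions fix the endpoints of each scan. Interval arithmetic on a fine mesh would make the sign checks rigorous on the tested compact range.

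\textbf{Main obstacle.} I expect the hardest part to be the cancellations that appear in the delicate cases.
\begin{itemize}
\item At $r=0$ the small-$x$ leading terms of $Z^{\beta=2}$ and $Z^{\beta=4}$ coincide (both equal $1$). The ordering is then decided at order $x^2$, and possibly beyond, where the normalization conventions "used in the source notes" enter sharply.
\item For $\beta=1$ the large-$x$ leading terms cancel twice, which makes both the asymptotic argument and the floating-point evaluation fragile.
\end{itemize}
For both problems I would first compute the exact rational series coefficients symbolically and compare them order by order. I would switch to floating point only once the difference is bounded away from zero.
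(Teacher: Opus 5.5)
The paper gives no argument beyond the assertion that numerical evaluation was carried out, so a careful numerical and asymptotic plan is in its spirit. However, Step~1, which anchors everything else, fails as written, for two concrete reasons.

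\textbf{First, the Jacobians are attached to the wrong classes.} Write $\delta=\theta_1-\theta_2$ and expand
\[
|e^{i\theta_1}-e^{i\theta_2}|^4=6-4(e^{i\delta}+e^{-i\delta})+(e^{2i\delta}+e^{-2i\delta}).
\]
Then use $\int\frac{\dd\theta_1\dd\theta_2}{(2\pi)^2}\,e^{ir(\theta_1+\theta_2)+ik\delta+x(\cos\theta_1+\cos\theta_2)}=I_{r+k}(x)I_{r-k}(x)$. This shows that the $\beta=1$ closed form is the normalized two-angle integral with $|e^{i\theta_1}-e^{i\theta_2}|^4$, i.e.\ the coset $U(4)/Sp(4)$ with doubly degenerate eigenphases. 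By contrast, $\int_0^1 I_{2r}(2xt)\,\dd t$ is the integral with $|e^{i\theta_1}-e^{i\theta_2}|^1$ over $U(2)/O(2)$. The angular exponent is $4/\beta$, not $\beta$. With your assignment the quadrature returns the $\beta=1$ and $\beta=4$ closed forms interchanged, and hence the reversed ordering. Moreover, $U(2)/Sp(1)\cong U(1)$ carries only one angle.

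\textbf{Second, the $\beta=4$ integral--series identity is false, and your derivation cannot work.} After substituting $y=xt$, the addition theorem gives $\int_0^1\sum_{m\in\Z}I_{r+m}(xt)I_{r-m}(xt)\,\dd t$. Its Bessel functions have argument $xt$, so no term $I_{r+m}(x)I_{r-m}(x)/(2m+1)$ can emerge. Already at $r=0$ the integral is $1+x^2/3+\dots$, while the series is $1+7x^2/12+\dots$. The Fourier series of $|\sin(\delta/2)|$ gives the correct expansion $I_r^2-2\sum_{k\ge1}I_{r+k}I_{r-k}/(4k^2-1)$. The printed series exceeds it by $\sum_{m\ge1}I_{r+m}I_{r-m}/(2m-1)>0$. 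Your termwise check would therefore fail, and only the integral form should be used.

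A smaller slip concerns $\beta=1$ at large $x$: the rate is $e^{2x}x^{-3}$, not $x^{-5/2}$. Products of Hankel expansions give $e^{2x}x^{-1}$ times integer powers of $x^{-1}$, and the orders $x^0$ and $x^{-1}$ cancel. Your conclusion $Z^{\beta=1}/Z^{\beta=2}\to0$ is unaffected.

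Once the measures are assigned correctly, the numerics become unnecessary. Set $\phi=(\theta_1+\theta_2)/2$ and $\psi=(\theta_1-\theta_2)/2$. The $\phi$-integral yields $I_{2r}(2x\cos\psi)$, so all three closed forms become
\[
Z=\frac{\int_0^\pi\sin^\gamma\psi\,I_{2r}(2x\cos\psi)\,\dd\psi}{\int_0^\pi\sin^\gamma\psi\,\dd\psi},
\qquad \gamma=4,2,1\ \text{for}\ \beta=1,2,4 .
\]
Each $Z$ is therefore the mean of $I_{2r}(2x|T|)$, with $T=\cos\psi$, under the density $\propto(1-T^2)^{(\gamma-1)/2}$ on $[-1,1]$. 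For $x>0$ this function is positive for $T\neq0$ and strictly increasing in $|T|$. For $\gamma'>\gamma$ the likelihood ratio $(1-T^2)^{(\gamma'-\gamma)/2}$ decreases in $|T|$, so $|T|$ is stochastically smaller for larger $\gamma$.

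This proves $0<Z^{\beta=1}<Z^{\beta=2}<Z^{\beta=4}$ for every integer $r\ge0$ and every $x>0$, not just on a tested range. It settles the positivity of $Z^{\beta=1}$, which you left open. It handles $r=0$ through $\langle T^2\rangle_\gamma=1/(\gamma+2)$, i.e.\ $1/6<1/4<1/3$. It also removes the large-$x$ cancellations. Note too that in these coset variables the ordering comes from the angular repulsion exponent $4/\beta$ \emph{decreasing} as $\beta$ rises.
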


The ordering provides a diagnostic of the three Dyson classes and catches sign or normalization mistakes in the coset formulas.

\subsection{Orthogonal and symplectic reductions}
\label{sec:orth-symp}
The unitary class gives the cleanest display of the Toeplitz determinant, but Type-IIB orientifold variants naturally produce the other two chiral classes. In the orthogonal case, the bifundamental can be taken real and the quotient leaves a chiral BDI ensemble. In the symplectic case, the bifundamental is quaternionic and the quotient leaves a chiral CII ensemble. The singular values remain non-negative in all three cases. The orbit volume changes, and with it the Vandermonde power.

I write the finite-dimensional eigenvalue density as
\begin{equation}
\prod_{a<b}|\lambda_a-\lambda_b|^\beta
\prod_a\lambda_a^{\frac\beta2(r+1)-1}w(\lambda_a)\,\dd\lambda_a,
\qquad \beta=1,2,4.
\label{eq:three-beta-density}
\end{equation}
For $\beta=1$ and $\beta=4$, the corresponding microscopic correlations are Pfaffian in place of determinantal. The kernels are Bessel kernels in the appropriate Pfaffian sense. The same Dyson index that changes bulk eigenvalue repulsion also changes the edge exponent.

The antiunitary class forms part of the hard-edge data. If the gravitational BPS endpoint map has a real structure, the ADHM comparison must use the orthogonal chiral class. If it has a quaternionic structure, the comparison must use the symplectic class. A matching theorem must therefore match the antiunitary class as well as the Bessel order.

\begin{proposition}[Dyson-class matching criterion]
A proposed ADHM realization of the gravitational BPS hard spectral edge must match both the integer hard-edge index and the antiunitary class. Equality of rectangularity and gravitational index is insufficient if the real, complex, or quaternionic structures differ.
\end{proposition}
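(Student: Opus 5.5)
The claim has two parts: necessity of matching the integer hard-edge index, and necessity of matching the antiunitary class, with the second not implied by the first. I would prove both by showing that the microscopic hard-edge limit is a complete invariant of the pair $(a,\beta)$. Two realizations with different pairs then give different limiting correlation functions, so they cannot be the same microscopic edge.

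\textbf{Step 1: both sides reduce to the same normal form.} On the gravitational side, the theorem on BPS square to Laguerre hard spectral edge says that any index- and class-preserving truncation of $\cQ_\nu$ gives the density \eqref{eq:Laguerre-density} with parameters $(a_\nu,\beta_{\rm grav})$. On the ADHM side, the James--Hua Jacobian \eqref{eq:James-Hua} and the localized integral \eqref{eq:ADHM-Laguerre} give the density \eqref{eq:three-beta-density} with parameters $(r,\beta_{\rm ADHM})$. Here $\beta_{\rm ADHM}$ is read off the table in Section~\ref{sec:adhm-data}. A \emph{realization} of the gravitational edge therefore means equality of the two microscopic limits.

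\textbf{Step 2: the index must match.} At fixed $\beta$ I would use the small-$\zeta$ exponent of the microscopic density as the separating observable. For $\beta=2$ this is $\rho_{s,a}\sim\zeta^{2a+1}$ from \eqref{eq:Bessel-density}. For general $\beta$ it is the exponent $\frac\beta2(a+1)-1$ inherited from the weight at the origin. This exponent survives the hard-edge scaling, because the scaling only rescales $\lambda$ and the non-universal potential $V$ drops out. Hence $a_\nu\neq r$ gives distinct limits.

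\textbf{Step 3: the class must match, even when $a_\nu=r$.} Suppose $a_\nu=r$ but $\beta_{\rm grav}\neq\beta_{\rm ADHM}$. I would separate the two limits by an observable independent of the edge exponent. The natural choice is the level repulsion exponent $|\lambda_i-\lambda_j|^\beta$ between two small eigenvalues, which persists in the microscopic scaling. Independently, $\beta=2$ gives determinantal correlations while $\beta=1,4$ give Pfaffian ones. Equivalently, the finite-volume partition functions \eqref{eq:beta2-two}--\eqref{eq:beta4-two} differ at equal $r$, as the ordering proposition confirms numerically. This shows that equality of rectangularity and index is insufficient.

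The main obstacle is Step 3: showing the repulsion exponent is a genuine invariant of the limit rather than an artifact of finite truncation. I would handle this by noting that the antiunitary class is assumed to be preserved along the truncation sequence, exactly as in the theorem's hypotheses. The $\beta$ in the limit is then the $\beta$ of the continuum map. A further subtlety is that with $\beta$ varying, the single edge exponent $\frac\beta2(a+1)-1$ could coincide for different pairs; for example $(a,\beta)=(0,4)$ and $(1,2)$ both give exponent $1$. This is precisely why the two-pair comparison in Step 3, rather than the edge exponent alone, is needed.
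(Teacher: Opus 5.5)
Your proposal is correct and follows essentially the paper's own reasoning, which treats $(a,\beta)$ as the data labelling the Laguerre hard-edge class. The paper separates Dyson classes through the Vandermonde power, the determinantal versus Pfaffian structure, and the $\beta$-dependence of the edge exponent, just as you do.

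One small simplification: when $a_\nu=r$, the edge exponent $\frac\beta2(a+1)-1$ is already strictly increasing in $\beta$, so it alone separates the classes at equal index. The coincidence you flag, $(0,4)$ versus $(1,2)$, matters only when both labels differ, and there your repulsion-exponent observable, which does not depend on $a$, settles it.
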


The Dyson index records how charge conjugation, orientation, and possible orientifold reality conditions act on the microscopic square root. It is part of the definition of the endpoint universality class.

The same statement holds in gravity. The antiunitary structure is part of the self-adjoint endpoint problem. Two BPS endpoint maps with the same index can have different hard-edge statistics if their real structures are real, complex, or quaternionic. A Type-IIB realization describes the same microscopic edge only when the real structure of the ADHM block matches the real structure of the continuum Hessian.

\section{Gravitational interpretation and matching}
\label{sec:grav-matching-group}
\subsection{The BPS limit Hessian as the gravitational hard spectral edge}
\label{sec:grav-hard-edge}
The gravitational operator is the physical BPS endpoint Hessian in a specified charge sector. This $E=0$ operator is the adjoint square~\eqref{eq:BPS-limit-Hessian}. The hard spectral edge is the origin of the spectrum of this physical endpoint operator. The full non-BPS wormhole Hessian belongs to the finite-throat problem.

Let $a_\nu$ be the hard-edge index of $\cQ_\nu$. In a Fredholm model, $a_\nu$ is the difference between the dimensions of the two endpoint kernels after the zero-mode quotient. In a finite-mode approximation, $a_\nu$ is the rectangular defect of the matrix representing the map. In the radial Sasaki--Mukhanov problem, the charge-sector boundary condition and the self-adjoint domain determine $a_\nu$. The endpoint operator therefore computes the index.

The chiral double~\eqref{eq:intro-double} gives eigenvalues $\pm s_j$ away from zero. Squaring gives the physical eigenvalues $s_j^2$ of $\cH_\nu$. The microscopic density at zero is then Eq.~\eqref{eq:Bessel-density}, with $a=a_\nu$ and with the Dyson index fixed by the antiunitary class of the endpoint problem. This construction is the gravitational chiral hard spectral edge.

The distinction from bosonic JT gravity is clear. The standard JT matrix integral~\cite{SaadShenkerStanford2019,StanfordWitten2019} has a soft spectral edge described by Airy universality. The Type-IIB BPS endpoint Hessian has a hard endpoint because the operator is an adjoint square. A soft spectral edge can move under deformations of the potential. The hard spectral edge at zero is protected by the square and by the chosen charge-sector domain.

The random-matrix class therefore admits a geometric reading. In JT-like soft-edge problems, an equilibrium density determines an endpoint that can move when the potential changes. In the BPS endpoint problem, the non-negativity of a physical square pins the endpoint at zero. The relevant distinction is between an equilibrium edge and a kinematic spectral endpoint. The axion--dilaton BPS endpoint belongs to the kinematic case.

\subsection{Radial endpoint problem and neck intuition}
\label{sec:radial-neck}
The radial problem displays the endpoint square most directly. The charge-sector Routhian converts the axion into a fixed flux variable and produces a first-order BPS equation at $E=0$. Linearizing the BPS equation gives the first-order map $\cQ_\nu$. The Hamiltonian constraint eliminates the lapse and non-dynamical metric fluctuations and leaves the Sasaki--Mukhanov, or Schur-complement, physical variable. The quadratic action of that variable is the adjoint square.

An $E>0$ non-BPS wormhole is a finite throat with a neck. Cutting the throat creates an artificial boundary. The scalar momentum at that boundary can fail to vanish, so the cut half-geometry is a constrained instanton unless the correct neck data and Legendre term are imposed. These operations belong to the full non-BPS spectrum and to the construction of the long-distance two-end operator term. The BPS endpoint Hessian is defined before those operations enter.

The radial coordinate also clarifies the hard spectral edge. The non-extremality parameter measures the departure from the BPS instanton endpoint. On the physical endpoint domain, the quadratic fluctuation energy is a norm of a first-order variation. Hence the local spectral coordinate is a squared singular value. The hard spectral edge is the operator-theoretic image of the BPS inequality.

Geometrically, the BPS instanton is the tip of the radial family. The non-BPS wormhole branch opens away from that tip, while the physical endpoint Hessian measures the squared slope of admissible directions there. Singular values are the principal slopes of this deformation map. A small singular value means that the BPS equations are violated weakly along that direction, so the near-BPS throat deformation has a small quadratic cost. The hard-edge exponent measures how many such near-BPS deformations remain after I remove the exact BPS moduli and gauge directions.

The throat picture provides geometric intuition. The edge itself is already present at the BPS endpoint before any cut or gluing operation is imposed on the non-BPS throat. The square-root factorization gives the theorem.

\subsection{Airy and Bessel endpoint classes}
\label{sec:false-friend}
Bosonic JT gravity provides the contrasting soft-edge class. The JT matrix integral has a soft spectral edge. An equilibrium density determines the endpoint of the spectrum, and the local kernel is Airy. The spectrum lacks the adjoint-square constraint that pins a hard edge to the non-negative axis.

The BPS endpoint Hessian belongs to the hard-edge class. Its spectrum is non-negative because the operator is $\cQ_\nu^\dagger\cQ_\nu$ on the physical charge-sector domain. The corresponding random-matrix model is a Wishart/Laguerre model, and the local kernel is Bessel. The Airy and Bessel kernels encode different endpoint universality classes: the Airy density vanishes with a square-root law at a moving soft endpoint, whereas the chiral hard-edge density rises with an index-dependent Bessel power at a fixed spectral endpoint.

I use this distinction as a direct diagnostic. If a calculation of the BPS endpoint spectrum produces an Airy edge after the charge-sector reduction and zero-mode quotient, then the square-root identification has failed or the finite-mode truncation has destroyed the endpoint domain. A Bessel hard spectral edge is the universal consequence of the endpoint square.

\subsection{ADHM and continuum representations}
\label{sec:keystone}
The ADHM/gravity relation compares two linear problems supported by the same Type-IIB instanton background. The ADHM deformation problem produces the bifundamental chiral block. The gravitational BPS endpoint map produces the adjoint square that gives the physical endpoint Hessian. Both are first-order problems on the same background, and both carry antiunitary symmetry data inherited from the background and from the orientifold or gauge group.

The two-representation viewpoint fixes the label dictionary. The ADHM instanton number is $k$. The bifundamental rectangular defect is $r=|N-k|$. The axion or form-field charge sector is labelled by $\nu$. The gravitational hard-edge index is $a_\nu$. These numbers can agree in a particular duality frame, but agreement requires a charge and index map. A matching theorem must identify the charge map, boundary condition, zero-mode quotient, and antiunitary class.

\begin{theorem}[Two-representation matching]
Suppose a Type-IIB compactification provides a D-instanton/ADHM description of the same charge sector whose BPS endpoint Hessian is $\cH_\nu=\cQ_\nu^\dagger\cQ_\nu$. The ADHM hard spectral edge and the gravitational hard spectral edge coincide if and only if the charge dictionary maps the sector $\nu$ to the relevant ADHM block, the Fredholm hard-edge index of $\cQ_\nu$ equals the rectangular defect of that block, and the antiunitary symmetry classes agree. Under these hypotheses the microscopic endpoint kernel is the same Laguerre/chiral kernel. If these hypotheses fail, the two systems remain singular-value problems but describe different microscopic edges.
\end{theorem}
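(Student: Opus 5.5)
The plan is to reduce both sides to the same finite-dimensional normal form and then compare the data that the hard-edge limit keeps. On the gravitational side, the theorem on the BPS square and the Laguerre hard spectral edge applies directly. An index-, class- and boundary-form-preserving truncation of $\cQ_\nu$ gives a rectangular chiral block whose squared singular values follow the Laguerre density~\eqref{eq:Laguerre-density}, with hard-edge index $a_\nu=\operatorname{ind}_{\rm edge}\cQ_\nu$ and Dyson index $\beta_{\rm grav}$ set by the antiunitary structure of the endpoint domain. On the ADHM side, the singular-value gauge fixing~\eqref{eq:James-Hua} and the localized integral~\eqref{eq:ADHM-Laguerre} give a Laguerre ensemble with rectangular exponent $r=|N-k|$ and Dyson index $\beta_{\rm ADHM}$ read off from the reality table (AIII, BDI, CII). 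Both sides are therefore points in the two-parameter family of Laguerre hard-edge classes labelled by $(a,\beta)$. By hard-edge universality, the microscopic kernel depends only on these two parameters, since the potential $V$ and the Gaussian weight $\e^{-\beta\lambda/4}$ drop out under microscopic scaling.

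For sufficiency, the first hypothesis (the charge dictionary sends $\nu$ to the ADHM block in question) guarantees that we are comparing the same sector rather than two unrelated Laguerre problems. The second hypothesis gives $a_\nu=r$, and the third gives $\beta_{\rm grav}=\beta_{\rm ADHM}$. Both limits then equal the same kernel: Eq.~\eqref{eq:Bessel-kernel} with $a=r$ when $\beta=2$, or the matching Pfaffian Bessel kernel when $\beta=1,4$. One bookkeeping point has to be checked. Both densities carry the edge exponent $\frac\beta2(a+1)-1$ in the same convention, so the identification $a\leftrightarrow r$ involves no shift; I would confirm this by comparing Eq.~\eqref{eq:Laguerre-density} with Eq.~\eqref{eq:James-Hua} term by term.

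For necessity, I would show that the map $(a,\beta)\mapsto$ microscopic kernel is injective, so that distinct data always give distinct edges. The index is detected by the small-$\zeta$ behaviour of the density. For $\beta=2$, Eq.~\eqref{eq:Bessel-density} gives $\rho_{s,a}(\zeta)\sim\zeta^{2a+1}$, and for general $\beta$ the exponent comes from $\lambda^{\frac\beta2(a+1)-1}$. The Dyson index is detected by the level-repulsion power $|\lambda_i-\lambda_j|^\beta$ in the two-point function, and also by whether the correlations are determinantal or Pfaffian. Given $\beta$, the edge exponent then fixes $a$. The Dyson-class matching criterion stated earlier already records that equal index with different $\beta$ is not enough, and the argument above supplies the converse direction. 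The charge-dictionary condition is treated as part of the definition of ``the same edge'': if $\nu$ maps to a different block, the two kernels belong to different sectors of $W_\nu[b]$ and are not compared at all. Finally, when any hypothesis fails, both systems are still Laguerre problems, because each one separately is an adjoint square of a rectangular block, which gives the last sentence of the statement.

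The main obstacle is the step that makes $a_\nu$ and $\beta_{\rm grav}$ well defined and stable under truncation. That requires the Fredholm index of $\cQ_\nu$ on $\mathcal D_{\nu,\phys}$, after the zero-mode quotient and Sasaki--Mukhanov reduction, to equal the rectangular defect of the finite block, with no spurious modes entering from the charge-sector boundary condition. I would handle this by treating the truncation hypothesis of the BPS-square theorem as an input assumption, and by stating explicitly that the matching theorem is conditional on it, rather than trying to compute the index from the radial problem here.
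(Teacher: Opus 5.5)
Your sufficiency argument is the paper's own. The paper gives no separate proof of this theorem. The argument it does give, attached to the subsequent Conditional ADHM/gravity matching theorem, runs as follows: the ADHM side yields the Laguerre measure~\eqref{eq:ADHM-Laguerre} with data $(r,\beta)$, the BPS-square theorem yields a Laguerre edge with data $(a_\nu,\beta)$, and hard-edge universality identifies the kernels when these data agree.

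Where you differ is the converse. The paper only asserts it, through the Dyson-class matching criterion and the remark that the exponent $2a+1$ in~\eqref{eq:Bessel-density} is the direct observable of the index. You actually prove the index and class parts of necessity by showing that $(a,\beta)\mapsto$ kernel is injective. The level-repulsion power and the determinantal versus Pfaffian structure fix $\beta$; then the strictly increasing edge exponent $\frac\beta2(a+1)-1$ fixes $a$. This gives a genuine ``only if'' for two of the three conditions. It also makes clear, correctly, that the charge-dictionary condition cannot be detected from kernel data. That condition can only enter as part of what ``the same edge'' means, which is how the paper implicitly uses it as well.

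Two small points should be made explicit:
\begin{enumerate}
\item ``Coincide'' should be read modulo the non-universal microscopic scale, which the paper treats as separate matching data. Your diagnostics are exponents, so they are scale-invariant and survive this.
\item Your claim that the ADHM kernel depends only on $(r,\beta)$ needs the flavour factors $\prod_f(\lambda_a+x_f^2)$ in~\eqref{eq:ADHM-Laguerre} to be absent or decoupled. Massless insertions would shift the detected edge exponent by one per flavour, so your injectivity argument would then read off a shifted index rather than $r$. The paper's argument shares this caveat.
\end{enumerate}
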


The unification is a controlled equality of edge data. The background defines the square-root maps, and the matching theorem states when their singular-value edges describe the same physical edge.

I therefore view one Type-IIB background at two resolutions. The continuum description sees the endpoint through the form-field Routhian and the physical BPS endpoint Hessian. The microscopic brane description sees a finite-dimensional endpoint through the ADHM quotient. The coordinates that survive near the hard spectral edge are singular values. When the matching theorem holds, these singular values carry the same index and symmetry data in the continuum and ADHM descriptions.

\subsection{Index audit and matching theorem}
\label{sec:matching}
The problem contains four labels with distinct definitions.
\begin{center}
\begin{tabular}{c p{9.8cm}}
\toprule
symbol & meaning \\
\midrule
$\nu$ & chosen axion/form-field charge sector of the Type-IIB saddle \\
$k$ & ADHM instanton number, equivalently the number of D$(-1)$ branes \\
$r=|N-k|$ & rectangular defect of the ADHM bifundamental block in the chosen finite-dimensional slice \\
$a_\nu$ & hard-edge index of the BPS limit map $\cQ_\nu$ on its physical endpoint domain \\
\bottomrule
\end{tabular}
\end{center}
Rectangular ADHM data and fixed charge-sector gravitational data can meet at the same hard spectral edge. ADHM provides a rectangular singular-value block, and the gravitational endpoint analysis provides an adjoint square. The two describe the same hard spectral edge when the index and symmetry data match.

\begin{theorem}[Conditional ADHM/gravity matching]
Consider a Type-IIB compactification or holographic sector in which the D$(-1)$/D3 ADHM block gives the microscopic degrees of freedom that resolve the BPS instanton endpoint $\Phi_0(\nu)$. Assume that the charge dictionary assigns the gravitational charge sector $\nu$ to ADHM data $(N,k)$, that the endpoint map $\cQ_\nu$ and the ADHM bifundamental block have the same antiunitary class $\beta$, and that the hard-edge index computed from the physical endpoint domain equals the ADHM rectangular defect,
\begin{equation}
a_\nu=r=|N-k|.
\end{equation}
Then the microscopic edge of the Type-IIB BPS endpoint Hessian and the ADHM singular-value edge are in the same Laguerre/chiral universality class. In the unitary case their local kernel is the Bessel kernel of order $a_\nu=r$.
\end{theorem}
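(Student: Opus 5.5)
The argument reduces both sides to the same finite hard-edge data $(\beta,a)$ and then invokes hard-edge universality once. First I treat the gravitational side. By the Theorem (BPS square to Laguerre hard spectral edge), any index-, class- and boundary-form-preserving sequence of finite truncations of $\cQ_\nu$ yields rectangular chiral blocks. Their squared singular values carry the Laguerre density \eqref{eq:Laguerre-density} with exponent $\frac\beta2(a_\nu+1)-1$, where $a_\nu=\operatorname{ind}_{\rm edge}\cQ_\nu$ and $\beta$ is the antiunitary class of the endpoint map. The microscopic limit at zero is then the Bessel kernel of order $a_\nu$, or its Pfaffian analogue for $\beta=1,4$. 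Endpoint positivity (the Proposition on $\operatorname{Spec}(\cH_\nu)\subset[0,\infty)$) guarantees that the edge is hard and pinned at zero, so no soft-edge (Airy) alternative arises.

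Second, I treat the ADHM side. I use the charge dictionary hypothesis to fix the block $I\in\Hom(\C^k,\C^N)$ attached to $\nu$. Singular-value gauge fixing gives the James--Hua Jacobian \eqref{eq:James-Hua}, whose Vandermonde power is $\beta$ (read off from the reality table) and whose edge exponent is $\frac\beta2(r+1)-1$ with $r=|N-k|$. Localization then reduces the super-ADHM integral to the finite Laguerre integral \eqref{eq:ADHM-Laguerre}. The Gaussian weight $\e^{-\frac\beta4\lambda}$ and the mass insertions are smooth and nonvanishing at $\lambda=0$, so they enter only as non-universal regulator data. Under the standard double scaling, the ADHM edge therefore converges to the Bessel-class kernel of order $r$ in Dyson class $\beta$.

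Third, I compare. A Laguerre hard-edge limit is determined entirely by the pair $(\beta,\text{exponent})$, and the exponent $\frac\beta2(a+1)-1$ determines $a$ once $\beta$ is fixed. The hypotheses $\beta_{\rm grav}=\beta_{\rm ADHM}$ and $a_\nu=r$ then give identical limiting kernels. For $\beta=2$, this is explicitly $K_{a_\nu}$ of \eqref{eq:Bessel-kernel}, with diagonal density \eqref{eq:Bessel-density} vanishing as $\zeta^{2r+1}$. Both sides have their own microscopic scale, fixed respectively by the compactification and by the ADHM regulator. I absorb each scale into the unfolding variable $\zeta$, consistent with treating normalization as matching data.

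The main obstacle is the universality step on the ADHM side. The localized integral \eqref{eq:ADHM-Laguerre} has $s=\min(N,k)$ eigenvalues. A microscopic limit therefore requires $s\to\infty$ at fixed $r$, which is not the Dorey fixed-$k$ large-$N$ regime, since there $r$ grows. I would first try the large-$k$ regime with $N-k=r$ held fixed. In that regime, standard results on hard-edge universality for Laguerre ensembles with smooth weights apply directly. The fermionic numerator then enters only through fixed-$N_f$ mass factors, so its effect is captured by the Toeplitz/colour--flavour forms \eqref{eq:Toeplitz}, which depend only on $(r,\beta)$. On the gravitational side, the analogous delicate point is that the truncations actually preserve $a_\nu$ and $\beta$. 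This is assumed in the hypotheses, so the proof takes it as given.
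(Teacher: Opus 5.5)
Your proposal is correct and follows the paper's proof. The gravitational side is reduced to Laguerre hard-edge data $(\beta,a_\nu)$ via the BPS-square theorem, and the ADHM side to $(\beta,r)$ via the James--Hua/localized integral \eqref{eq:ADHM-Laguerre}; the hypotheses $a_\nu=r$ and equal antiunitary class then let hard-edge universality identify the kernels, with the microscopic scale treated as matching data.

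Your added point, that the ADHM microscopic limit needs $s=\min(N,k)\to\infty$ at fixed $r$ (not the Dorey regime of fixed $k$, where $r$ grows), makes explicit a double scaling the paper leaves implicit. In that limit keep the masses $x_f$ fixed and non-zero, as in your second step. Microscopically scaled masses produce the massive kernel, which is the regime where the Toeplitz forms enter; those forms depend on $x$ and $N_f$, not only on $(r,\beta)$. Vanishing masses shift the order to $r+N_f$ at $\beta=2$. So your Toeplitz remark should be dropped rather than used to support the $K_r$ conclusion.
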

\begin{proof}
The ADHM side gives the finite Laguerre measure~\eqref{eq:ADHM-Laguerre} with index $r$ and Dyson index $\beta$. The gravitational side gives the Laguerre hard spectral edge by the BPS square theorem, with index $a_\nu$ and the antiunitary class of $\cQ_\nu$. If $a_\nu=r$ and the symmetry class agrees, the universality theorem identifies the microscopic kernels. The proof requires equality of the microscopic edge data; it makes no claim of equality between the full operators.
\end{proof}

I make the theorem falsifiable. An incorrect charge map, a different antiunitary class, or a different endpoint index breaks the identification. The ADHM integral can remain a chiral random-matrix model, and the BPS endpoint Hessian can still have a chiral hard spectral edge, while the two describe different members of the same broad class instead of the same microscopic edge.

\subsection{Two Bessel structures and their relation}
\label{sec:two-bessels}
In the present calculation, I use three related Bessel structures, each carrying a different physical meaning.

Compact charge sectors produce modified Bessel functions through Fourier analysis. A unit-charge dilute law gives
\begin{equation}
\exp[2\zeta\cos\theta]=\sum_{\ell\in\Z} I_\ell(2\zeta)\,\e^{i\ell\theta}.
\label{eq:charge-Bessel}
\end{equation}
Here $\ell$ is a charge-transfer label. This formula describes a theta representation built from charge-sector coefficients.

The chiral finite-volume partition function gives modified Bessel functions such as Eq.~\eqref{eq:one-flavour-Bessel} and the Toeplitz determinant~\eqref{eq:Toeplitz}. In that setting, the order is the rectangular or topological index of the chiral block.

The hard-edge spectral density uses ordinary Bessel functions, Eq.~\eqref{eq:Bessel-density}. It describes singular values. Analytic continuation to the spectral axis connects the chiral finite-volume partition function to the hard-edge spectral density. For example~\cite{AbramowitzStegun},
\begin{equation}
I_r(i\zeta)^2-I_{r+1}(i\zeta)I_{r-1}(i\zeta)
=(-1)^r\left[J_r(\zeta)^2-J_{r+1}(\zeta)J_{r-1}(\zeta)\right].
\label{eq:analytic-continuation}
\end{equation}
This identity relates two specializations of the chiral hard-edge structure. The two-flavour partition function and the spectral density remain different observables.

I use this separation to assign each Bessel function to a definite object. The sector Bessel function gives the charge-sector weights after the two-end coefficient is reduced. The chiral finite-volume Bessel function records how compact zero modes encode a hard-edge index in the microscopic fluctuation problem. The spectral Bessel kernel gives the distribution of physical endpoint eigenvalues near zero. These are three faces of compactness and positivity, but they occupy different places in the wormhole path integral.

\section{Consistency tests, verification, and consequences}
\label{sec:verification-group}
\subsection{Consistency conditions and falsifiers}
\label{sec:redteam}
In the previous sections, I identified the reduced fixed-sector coefficient $W_\nu[b]$ as the object to be modeled. The ADHM/chiral-Wishart ensemble represents the microscopic hard-edge sector of the supermoduli integral that contributes to this coefficient. This interpretation fixes the data that the matching has to preserve: the determinant power, the endpoint domain, the integer labels, the special-function sector, the regulator, the supermoduli measure, and the normalization of the instanton action.

Gravitational fluctuations fix the determinant power. The endpoint Hessian is bosonic, so its Gaussian contribution carries an inverse square root of a determinant, with ghosts and collective coordinates treated as separate factors. Chiral random-matrix theory enters at a different point in the calculation: it classifies the local singular-value spectrum of the first-order map $\cQ_\nu$. Determinant powers change observables built from the spectrum, while the hard-edge kernel is fixed by the adjoint-square structure and by the endpoint index. The wormhole calculation must therefore establish the square form of the physical bosonic Hessian and compute its index before I assign any chiral ensemble to the coefficient.

The endpoint domain is equally rigid. The operator used in the construction is the BPS endpoint Hessian on the constraint-reduced endpoint domain. The full $E>0$ non-BPS Hessian, the neck-cut constrained problem, and the Euclidean contour are separate boundary-value problems. Negative or complex directions in a finite-throat problem affect the semiclassical weight of that non-BPS saddle after the solution has moved away from the BPS endpoint. They leave the hard spectral edge produced by the endpoint square unchanged.

I must also keep the integer labels separate. The ADHM rectangular defect counts unpaired directions in the microscopic brane block. The D-instanton number counts instanton charge in the ADHM sector. The axion charge selects the gravitational sector. The hard-edge index counts unpaired endpoint directions that survive the constraint reduction and zero-mode quotient of the BPS Hessian. A matching theorem can identify some of these integers in a specified sector, but the calculation must first define each one independently. This separation turns the ADHM comparison into a dictionary of computable data instead of a resemblance between formulas.

The special functions carry the same bookkeeping. Compact Fourier analysis produces the Bessel functions that weight charge sectors. Singular-value scaling produces the Bessel kernel at the hard edge. The sector dictionary supplies any relation between their orders. In the wormhole amplitude, the charge-sector coefficient and the fluctuation determinant answer different questions: the coefficient states how a sector enters the reduced path integral, while the determinant states how endpoint eigenvalues are distributed after I fix that sector.

Finite size supplies the regulator. At finite matrix size, the ensemble depends on the potential, and the potential fixes the microscopic scale and the bulk distribution. The universal edge law survives because the finite-mode truncation preserves the index, antiunitary class, and endpoint boundary form. Once these data agree, changing the regulator changes the bulk completion but leaves the Bessel hard-edge law intact.

The charge-sector path integral fixes the matrix model's physical target. In QCD, a fixed-topology gauge-field integral induces an ensemble of low Dirac operators and is represented microscopically by chiral random-matrix theory. In the present Type-IIB problem, the fixed-charge path integral induces a supermoduli problem whose reduced output is $W_\nu[b]$. The D$(-1)$/D3 super-ADHM integral is the microscopic Type-IIB representative of that supermoduli problem, and the chiral/Wishart ensemble is the hard-edge limit of its rectangular block. Thus the structural counterpart of the QCD gauge-field integral is the fixed-charge supergravity integral together with its ADHM reduction.

Fermions enter through the coefficient and leave the endpoint spectrum fixed. In protected BPS sectors, fermions cancel paired non-zero modes and leave a zero-mode-saturated supermoduli integral. The chiral/Wishart block gives the singular-value sector of that measure. The physical coefficient is non-zero precisely when the reduction data $b$ provide the insertions, liftings, or projections required by the fermionic zero modes. The bosonic square fixes the spectral edge; the supermoduli integral fixes the coefficient that carries it.

The instanton action and the ultraviolet normalization are matching data. Chiral random-matrix theory in QCD represents the universal fixed-topology low-energy partition function after the gauge theory has supplied the sector, scale, and normalization. I use the same convention here. The ADHM/chiral-Wishart integral represents the microscopic contribution to $W_\nu[b]$ after I fix the Type-IIB action normalization, zero-mode measure, and reduction data. The universal statement concerns the dependence on the hard-edge index and the antiunitary class; the absolute normalization belongs to the sector-matching problem.

These conditions give the central statement in operational form. The reduced coefficient $W_\nu[b]$ receives a microscopic Type-IIB contribution from a super-ADHM integral. At large $N$ in the protected D-instanton sector, this integral becomes the $k$-D-instanton measure on $AdS_5\times S^5$ multiplied by the centered zero-dimensional supersymmetric matrix-model factor $Z_k$. The chiral/Wishart structure is the hard-edge limit of the rectangular block inside this super-ADHM integral. The statement is local in the charge sector and local at the hard spectral edge; I obtain the theta-representation wormhole partition function only after I reduce and sum the resulting coefficients over charge.

The falsifiers are now explicit. The identification fails if the endpoint index disagrees with the ADHM rectangular defect in the proposed sector, if the antiunitary class disagrees, if I choose the charge-sector endpoint domain incorrectly, if the adjoint-square structure is lost, or if the finite-mode truncation fails to approximate the physical BPS endpoint operator. A formula-level analogy would only observe that two expressions contain Bessel functions. The present construction specifies the object that carries each Bessel function, the index that controls it, the real structure that fixes its universality class, and the calculation that can disprove the proposed identification. In this form, random-matrix technology becomes a computational tool for the wormhole coefficient.

\subsection{Computable checks and falsifiers}
\label{sec:checks}
These checks require concrete computations. I start with the continuum endpoint spectrum, continue through finite ADHM examples, and then embed the edge data in the full supermoduli coefficient. This order keeps the gravitational hard-edge index separate from the ADHM rectangular defect until the matching calculation relates them.

The gravitational endpoint density gives the most direct numerical diagnostic. I discretize the physical BPS endpoint Hessian on domains that preserve the charge-sector boundary condition and remove the zero modes. The small-$\zeta$ microscopic density must then rise as
\begin{equation}
\rho_{s,a_\nu}(\zeta)\sim
\frac{\zeta^{2a_\nu+1}}{\Gamma(a_\nu+1)\Gamma(a_\nu+2)}.
\end{equation}
The exponent measures the hard-edge index directly. In the wormhole problem, this exponent measures the repulsion of near-BPS deformation directions from the spectral endpoint at the BPS instanton.

The same endpoint calculation has to agree with the finite-volume algebra. The unitary finite-volume partition function is the Toeplitz determinant~\eqref{eq:Toeplitz}. Replacing $i-j$ by $i+j$ gives a Hankel determinant and changes the finite-volume integral. The Toeplitz structure therefore checks that the compact zero-mode integral belongs to the chiral finite-volume problem and remains separate from the charge-transfer law.

Finite ADHM examples separate the integer labels before I impose any gravitational matching. For each pair $(N,k)$, the calculation records the instanton number $k$, the brane rank $N$, the number of singular values $s=\min(N,k)$, and the rectangular defect $r=|N-k|$. The number of singular values is $s$, the Laguerre exponent is controlled by $r$, and the instanton number is $k$. The continuum Hessian supplies its own index $a_\nu$ only after I complete this microscopic bookkeeping.

The decisive comparison matches gravity to ADHM. A proposed identification must compute $a_\nu$ from the BPS endpoint Hessian and show $a_\nu=r$ in the relevant charge sector. It must also match the antiunitary class. If either equality fails, the ADHM block and the gravitational endpoint Hessian describe different hard spectral edges, even when both belong to the Laguerre universality class.

The edge calculation then has to enter the physical coefficient. The gravitational one-loop factor must use the bosonic determinant power. Any expression that imports a fermionic determinant must identify the physical fermionic zero modes, the insertions that saturate them, and the paired non-zero modes that cancel. A candidate formula for $W_\nu[b]$ must therefore specify which fermionic zero modes are exact, which insertions in $b$ saturate them, and which liftings or projections enter the reduction. Only after I supply these data does the matrix integral compute a physical reduced coefficient.

The Dorey large-$N$ limit supplies the holographic check. In a holographic D-instanton sector, the ADHM measure must reduce to an $AdS_5\times S^5$ collective-coordinate integral multiplied by a centered zero-dimensional supersymmetric matrix-model partition function. A Type-IIB realization of the wormhole coefficient must have this structure if the ADHM integral is to serve as the microscopic ancestor of $w_k[b]$.

\subsection{Numerical and analytic verification program}
\label{sec:numerical-program}
I execute the verification program in a strict sequence. First, one proceeds with the finite ADHM singular-value problem, because this calculation determines the candidate microscopic edge data without referencing the gravitational answer. Then, one proceeds to solve the gravitational endpoint Hessian, because that continuum operator supplies the spectrum to be classified. Only after both computations are complete do I compare the hard-edge index, the Dyson class, and the microscopic scale. The idea is that this strict order prevents the ADHM rectangular defect from being inserted by hand into the gravitational problem.

For a finite ADHM pair $(N,k)$, one computes the finite Laguerre integral with $s=\min(N,k)$ and $r=|N-k|$. The Vandermonde power must be $\beta$, the hard-edge exponent in the finite density must contain $\frac\beta2(r+1)-1$, and the unitary flavour integral must reduce to the Toeplitz determinant of Eq.~\eqref{eq:andreief-toeplitz}. One can execute these algebraic checks to detect transposition errors, wrong rectangularity assignments, and confusion between instanton number and Bessel order.

The continuum calculation then discretizes the radial Sasaki--Mukhanov/Schur-complement operator at the BPS endpoint in a chosen charge sector. The discretization must impose the same charge-sector boundary condition and zero-mode quotient as the continuum square. Its eigenvalues must remain non-negative up to numerical error, and the microscopic density of singular values must show the Bessel hard-edge rise fixed by $a_\nu$.

The matching step compares three data: the hard-edge index, the Dyson class, and the normalization of the microscopic scale. Matching the first two identifies the universality class. Matching the scale fixes the physical conversion between ADHM radial variables and gravitational endpoint eigenvalues. This scale depends on the compactification and can be measured once a compactification has been specified.

The analytic-continuation check uses the two-flavour Bessel minors as a closed-form laboratory. Continuing the modified Bessel functions to the spectral axis gives the ordinary Bessel combinations that enter the hard-edge density. This calculation supplements the operator matching by showing that the finite ADHM partition function and the endpoint density occupy the same hard-edge analytic structure.

\subsection{Consequences}
\label{sec:consequences}
The main consequence of this work is a change of organizing principle. Random matrices enter because the BPS endpoint Hessian is an adjoint square. The square turns the endpoint into a singular-value problem, and singular-value problems with a fixed index have Laguerre hard spectral edges. The random-matrix ensemble is therefore the universal local form of the physical endpoint spectrum.

This principle also reframes the non-BPS wormhole. The $E>0$ throat is a deformation away from the BPS instanton endpoint, and its complete Hessian remains an independent contour and boundary-value problem. The endpoint-edge classification is already fixed by the BPS endpoint square and requires no full stability theorem for the non-BPS saddle. Spectrally, this is the analogue of studying local normal directions at the boundary of a moduli space before solving the full off-boundary saddle problem. The non-BPS throat gives the macroscopic geometry that motivates the endpoint, while the square-root structure at the endpoint determines the hard-edge law. The endpoint supplies the local spectral description of the nearby wormhole branch.

ADHM supplies the corresponding microscopic description. The D$(-1)$/D3 system gives a singular-value block native to Type-IIB string theory. The rectangular bifundamental has the Laguerre measure that the BPS endpoint Hessian approaches after finite-mode truncation. When the charge map, index, and antiunitary class agree, the ADHM block and the gravitational Hessian describe the same microscopic hard spectral edge.

The theta representation then has a cleaner organization. Compact axion charge produces fixed-sector coefficients and, in dilute limits, modified Bessel functions. The hard-edge Bessel kernel gives spectral data inside a charge sector. The sector coefficient says how a charge sector enters the reduced path integral; the hard-edge kernel says how physical endpoint eigenvalues are distributed inside that sector. Thus the theta-representation wormhole partition function is a sector sum whose coefficient and fluctuation factor remain distinct.

The construction ends with falsifiable diagnostics. A Type-IIB realization must specify the physical endpoint domain, the Fredholm index, the antiunitary class, the zero-mode quotient, and the ADHM rectangular block. A mismatch among these ingredients falsifies the proposed unification. These diagnostics make the construction testable.

\subsection{Extensions}
\label{sec:reserved-extensions}
I leave two extensions for later analysis.

The first is the full $E>0$ non-BPS wormhole Hessian. Its spectrum depends on the complete finite-throat background, the ensemble at the neck or at the two asymptotic ends, and the Lefschetz contour. It can contain negative or complex directions absent from the BPS endpoint Hessian. Such directions affect the semiclassical weight of the non-BPS throat while leaving the hard-edge classification of the BPS endpoint operator unchanged.

The second extension is the long-distance two-end operator term. When a small throat is integrated out, its effect can be expanded in a multipole series of end insertions with a coefficient matrix. Different-component and same-component placements are placements of the same two-end kernel, as in the charge-coefficient analysis. That operator statement concerns the large-distance completion of the throat. It is separate from the local spectral statement about the endpoint Hessian.

These extensions matter for a complete wormhole path integral. The present paper identifies the microscopic hard-edge structure selected by the BPS endpoint square and explains its ADHM origin.

\section{Discussion}
\label{sec:discussion}
In this paper, I construct a coefficient-level microscopic model for the reduced fixed-sector coefficient $W_\nu[b]$. I fix the charge sector first. In that sector, the $E=0$ solution is the BPS instanton, while the $E>0$ solutions are non-BPS wormholes. The endpoint analysis shows that the physical Hessian at the BPS instanton, after constraint reduction and zero-mode quotient, is an adjoint square. I follow the spectral consequence of that square and identify the ADHM representative of the coefficient that it controls.

The spectral consequence is linear algebra. The non-zero eigenvalues of an adjoint square are squared singular values. Zero is therefore a hard spectral edge of the endpoint operator. The universal finite-mode model of this edge is the Laguerre, or chiral, random-matrix ensemble. This explains why a bosonic Type-IIB Hessian can have chiral random-matrix statistics while retaining the bosonic determinant power required by supergravity.

The ADHM construction supplies the microscopic mechanism. Its bifundamental block is rectangular. Singular-value gauge fixing gives the Laguerre measure. The real structure fixes the Dyson index. The finite partition functions are Bessel/Toeplitz or Pfaffian Bessel objects, which are the finite-dimensional signatures of a chiral hard spectral edge. The identification with gravity requires the ADHM hard-edge index and antiunitary class to match those of the BPS endpoint map.

The central structural lesson I draw is the separation of labels. The D-instanton number $k$, the ADHM rectangular defect $r$, the charge-sector label $\nu$, and the gravitational hard-edge index $a_\nu$ are distinct labels. They become equal only when a charge/index map proves the equality in a specified sector. This separation turns the Type-IIB statement into a controlled matching problem.

The physical picture is compact. At fixed charge, the $E=0$ BPS instanton supplies the endpoint operator. After constraint reduction and zero-mode removal, the endpoint Hessian is an adjoint square, and its non-zero spectrum is a squared singular-value spectrum. The microscopic edge is therefore Laguerre/chiral. The ADHM bifundamental supplies the Type-IIB rectangular block that realizes this edge. Non-BPS wormholes, neck-cut constrained instantons, and two-end operator terms are later structures built around the endpoint and preserve its role.

The final picture has three scales. The macroscopic geometry approaches the BPS instanton by shrinking the non-BPS throat to an endpoint in a chosen charge sector. The continuum fluctuation theory identifies the physical endpoint Hessian as the square of a first-order BPS map. The microscopic ADHM description resolves the same endpoint through singular values of a rectangular brane block. These three descriptions give the macroscopic throat, the continuum endpoint operator, and the microscopic brane coordinates of one hard-edge problem.

\section*{Acknowledgement}
I thank Alexander Altland for discussions on chiral and super random-matrix ensembles. I acknowledge discussions with participants at the ``Quantum PCP, Area Laws and Quantum Gravity'' workshops held at the Institute for Pure \& Applied Mathematics (IPAM, USA) and the Simons Institute for the Theory of Computing (SIfTC, USA). This work was supported in part by the U.S. National Science Foundation and the Simons Foundation, by the National Research Foundation of Korea (NRF) (RS-2021-NR060112), and by research funds from Kwangwoon University.

\appendix
\section*{Appendix}
\section*{Derivation of the singular-value Jacobian}
\label{app:jacobian}
Consider an $N\times k$ matrix over a real division algebra of real dimension $\beta$, with $N\ge k$ and $r=N-k$. The singular-value decomposition gives $I=U\Sigma V^\dagger$, where the squared singular values are $\lambda_a$. The invariant metric is $\Tr(\dd I^\dagger\dd I)$, understood over the relevant real algebra. The orbit volume separates into pairwise rotations of singular directions and rotations into the lower $r$-dimensional complement. Each pair $(a,b)$ contributes $|\lambda_a-\lambda_b|^\beta$. Each singular direction contributes $\lambda_a^{\beta r/2}$ from the lower block and $\lambda_a^{\beta/2-1}$ from the radial variable. Hence
\begin{equation}
\dd I=C_{N,k,\beta}
\prod_{a<b}|\lambda_a-\lambda_b|^\beta
\prod_a \lambda_a^{\frac\beta2(r+1)-1}\,
\dd\lambda_a\,\dd\mu(U)\,\dd\mu(V).
\label{eq:jacobian-app}
\end{equation}
This is the James--Hua measure, or equivalently the finite Laguerre ensemble measure.

\section{Notation and matching data}
\label{app:dictionary}
\begin{center}
\begin{tabular}{p{5.2cm}p{7.6cm}}
\toprule
charge-sector term & present use \\
\midrule
chosen axion/form-field charge sector & label $\nu$ of the BPS endpoint problem \\
$E=0$ BPS instanton & background $\Phi_0(\nu)$ on which $\cQ_\nu$ is defined \\
$E>0$ non-BPS wormhole & off-endpoint deformation; separate fluctuation problem \\
form-field Routhian & functional from which the BPS first-order map $\cF$ is varied \\
Sasaki--Mukhanov/Schur-complement Hessian & physical endpoint Hessian after constraints and gauge quotient \\
$\cH_\nu=\cQ_\nu^\dagger\cQ_\nu$ & source of the chiral hard spectral edge \\
$C_\nu^{ij}$ & unreduced two-end coefficient matrix in end-insertion label space \\
$W_\nu[b]$ & reduced scalar coefficient multiplying $e^{i\nu\theta}$ \\
reduction data $b$ & asymptotic data, trace or final insertion, projection, zero-mode insertion, source normalization, contour, and reduction operation \\
super-ADHM integral & microscopic Type-IIB representative of $W_\nu[b]$ before hard-edge limit \\
Dorey large-$N$ measure & $k$-D-instanton $AdS_5\times S^5$ collective-coordinate measure times $Z_k$ \\
ADHM rectangularity $r$ & microscopic Laguerre index before gravitational matching \\
$ a_\nu$ & gravitational hard-edge index of the BPS limit map \\
\bottomrule
\end{tabular}
\end{center}

\end{document}